\newcommand{\mc}[1]{\mathcal{#1}}
\newcommand{\ve}[1]{\mathbf{#1}}
\newcommand{\hY}{\hat Y_r}
\newcommand{\hy}{\hat y_r}
\newtheorem{proof}{Proof}
\newtheorem{prop}{Proposition}
\newtheorem{lemma}{Lemma}
\newtheorem{cor}{Corollary}
\begin{document}

\sloppy

\tikzstyle{block} = [draw, fill=blue!5, rectangle, rounded corners,
    minimum height=2em, minimum width=6em]
\tikzstyle{input} = [draw, fill=green!5, rectangle, rounded corners,
    minimum height=2em, minimum width=4em]
\tikzstyle{chan} = [draw, fill=red!5, rectangle, rounded corners,
    minimum height=2em, minimum width=4em]
\tikzstyle{output} = [coordinate]
\tikzstyle{pinstyle} = [pin edge={to-,thin,black}]
\tikzstyle{block2} = [draw, rectangle, 
    minimum height=1.5em, minimum width=2em]
\tikzstyle{sum} = [draw, circle, inner sep=0pt, node distance=1cm]

\newcommand{\tone}{\ensuremath{T_1}}
\newcommand{\ttwo}{\ensuremath{T_2}}
\newcommand{\tr}{\ensuremath{r}}

\newcommand{\w}{\ensuremath{W}}
\newcommand{\wone}{\ensuremath{W_1}}
\newcommand{\wtwo}{\ensuremath{W_2}}
\newcommand{\wonehat}{\ensuremath{\hat{W}_1}}
\newcommand{\wtwohat}{\ensuremath{\hat{W}_2}}

\newcommand{\x}{\ensuremath{X}}
\newcommand{\xone}{\ensuremath{X_1}}
\newcommand{\xtwo}{\ensuremath{X_2}}
\newcommand{\xr}{\ensuremath{X_r}}

\newcommand{\Y}{\ensuremath{Y}}
\newcommand{\yr}{\ensuremath{Y_r}}
\newcommand{\yone}{\ensuremath{Y_1}}
\newcommand{\ytwo}{\ensuremath{Y_2}}

\newcommand{\zone}{\ensuremath{Z_1}}
\newcommand{\ztwo}{\ensuremath{Z_2}}
\newcommand{\zr}{\ensuremath{Z_r}}

\newcommand{\Pone}{\ensuremath{P_1}}
\newcommand{\Ptwo}{\ensuremath{P_2}}
\newcommand{\Prel}{\ensuremath{P_r}}

\newcommand{\Nr}{\ensuremath{N_r}}
\newcommand{\None}{\ensuremath{N_1}}
\newcommand{\Ntwo}{\ensuremath{N_2}}

%
\title{Scalar Quantize-and-Forward for Symmetric Half-duplex Two-Way Relay Channels }

\author{\IEEEauthorblockN{ Michael Heindlmaier, Onurcan \.{I}\c{s}can, Christopher Rosanka}

\IEEEauthorblockA{Institute for Communications Engineering, Technische Universit\"at M\"unchen, 
Munich, Germany\\
Email: michael.heindlmaier@tum.de, onurcan.iscan@tum.de, ch.rosanka@mytum.de}
}

\maketitle

\IEEEpeerreviewmaketitle

\begin{abstract}
THIS PAPER IS ELIGIBLE FOR THE STUDENT PAPER AWARD\\
Scalar Quantize \& Forward (QF) schemes are studied for the Two-Way Relay Channel.
Different QF approaches are compared in terms of rates as well as relay and decoder complexity.
A coding scheme not requiring Slepian-Wolf coding at the relay is proposed and properties of the corresponding sum-rate optimization problem are presented. A numerical scheme similar to the Blahut-Arimoto algorithm is derived that guides optimized quantizer design.
The results are supported by simulations.

\end{abstract}  

\section{Introduction}

Consider a communication system where two nodes \tone\, and \ttwo\, wish to communicate to each other with the help of a relay \tr\, and there is no direct link between \tone\, and \ttwo. This scenario is known as a separated two-way relay channel (TWRC) \cite{gunduz2008rate,rankov2006achievable} and it incorporates challenging problems such as multiple access, broadcast, and coding with side information. 

In this work, we focus on Quantize \& Forward (QF) relaying: The relay maps its received (noisy) signal 
to a quantization index by using a quantizer function $\mathcal{Q}(.)$. The index is then digitally transmitted to the destination nodes through the downlink channels. 
We use information theoretic arguments to find quantizers that almost maximize the sum-rate, 
an approach that has been proposed in \cite{winkelbauer2011soft} and \cite{zeitler2012} in a similar context.

In general, \emph{vector} quantizers give the best performance, but under certain conditions \emph{scalar} quantizers almost maximize the sum-rate. Scalar quantizers are attractive because of their design and implementation simplicity.

Our main focus is the symmetric TWRC, where both users' channel qualities are the same, both in the uplink and downlink.
We describe the system model in Sec.~\ref{sec:sysmod}, and in Sec.~\ref{sec:theory} we compare different achievable rate regions for the TWRC. We introduce a new rate region that is smaller than previous regions, but almost as large for the symmetric TWRC. The achievability scheme does not require the relay to employ Slepian-Wolf Coding. 
In Sec.~\ref{sec:QuantizerDesign}, we look at the sum-rate optimization and quantizer design problems and propose a numerical solution. Examples of optimal quantizers are given and optimal time sharing parameters are calculated. In Sec.~\ref{sec:sims} we evaluate the performance of the optimized system by simulations. Sec.~\ref{sec:conclusion} concludes our work and gives future directions.
  
\section{System Model}
\label{sec:sysmod}
The system has two source nodes \tone\, and \ttwo\, that exchange their independent messages $\wone\ \in \{1,2,\ldots, 2^{nR_1}\}$, $\wtwo\ \in \{1,2,\ldots, 2^{nR_2}\}$ in $n$ channel uses through a relay node \tr. The source nodes cannot hear each other, so communication is possible only through the relay. The communication is split into two phases: In the multiple access (MAC) phase with $n_\text{MAC}$ channel uses, both source nodes encode their messages \wone\, and \wtwo\ to the MAC channel inputs $X_1^{n_\text{MAC}} $  and $X_2^{n_\text{MAC}}$, respectively, with $X_{1,t} \in \mc X_1$, $X_{2,t} \in \mc X_2$. 
Define $\alpha = n_\text{MAC} / n$ as the time fraction of this first phase.
The relay receives 
\begin{equation}
 Y_{r,t} = X_{1,t} + X_{2,t} + Z_{r,t}, \quad t=\{1,2,\ldots,n_\text{MAC}\},
\end{equation}
where $Z_{r,t} \sim \mc N(0,N_r)$ and $\mathbb E\{X_{1,t}^2 \} \leq P_1$, $\mathbb E\{X_{2,t}^2 \} \leq P_2$.

The relay maps $Y_r^{n_\text{MAC}}$ to a quantized representation $\hY^{n_\text{MAC}}$ with symbol alphabet $\mc{\hat Y}_r$. The quantizer index is $q = \mc Q(Y_r^{n_\text{MAC}})$. 
During the Broadcast (BC) phase with $n_\text{BC} = n-n_\text{MAC}$ channel uses, the relay transmits the codeword $X_r^{n_\text{BC}}(q)$. 
The received signals at \tone\, and \ttwo\, are:
\begin{align}
Y_{j,t} = X_{r,t} + Z_{j,t} , \quad t=\{n_\text{MAC}+1,\ldots,n\},
\end{align}
for $j\in\{1,2\}$, $\mathbb E\{X_{r,t}^2 \} \leq P_r$ and $Z_{j,t} \sim \mc N(0,N_j)$.
Nodes \tone\, and \ttwo\ decode $W_2$ and $W_1$, respectively, by using their own message as side information. Fig. \ref{fig:block} depicts the system setup.
In the following, we often omit the time index $t$ if we refer to a single channel use.
In the next section we review different coding schemes and compare their performance.

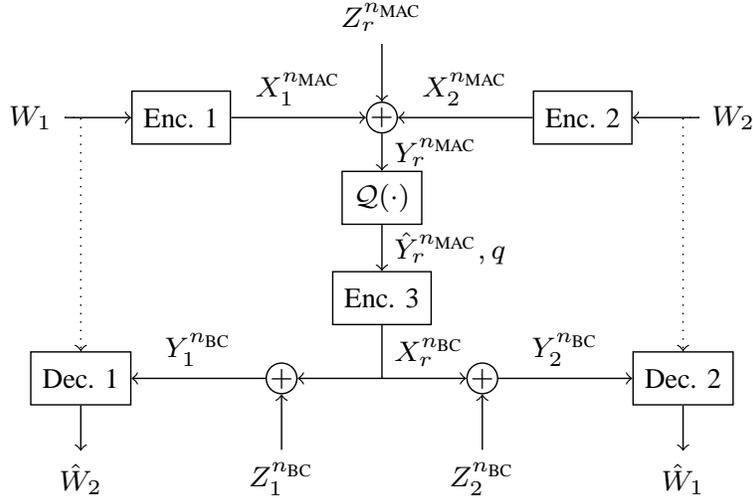
\begin{figure}[t]
\begin{center}
\resizebox{0.6\columnwidth}{!}{
  \tikzstyle{every node}=[font=\footnotesize ]
\begin{tikzpicture}[auto, node distance=2cm]
  \node (W1) {$\wone$};
  \node [block2, right of=W1, node distance = 15mm] (E1) {Enc. 1};
  \node [sum,  right of=E1, node distance =20mm] (summing) {$+$};
  \node [block2, right of=summing, node distance = 20mm] (E2) {Enc. 2};
  \node [right of=E2, node distance = 15mm] (W2) {$\wtwo$};
  \node [block2, below of=summing, node distance = 8mm] (Q) {$\mc Q(\cdot)$};
  \node [block2, below of=Q, node distance = 10mm] (E3) {Enc. 3};  
  \node [coordinate, node distance = 8mm,below of=E3] (corner){};
  \node [sum, left of=corner, node distance =10mm] (sum1) {$+$};  
  \node [block2,node distance = 20mm,left of = sum1] (D1) {Dec. 1};
  \node [sum,  right of=corner, node distance =10mm] (sum2) {$+$};
  \node [block2,node distance = 20mm,right of = sum2] (D2) {Dec. 2};
  \node [below of=D1, node distance = 10mm] (W2h) {$\wtwohat$};
  \node [below of=D2, node distance = 10mm] (W1h) {$\wonehat$};
  \node [above of=summing, node distance = 10mm] (Nr) {$Z_r^{n_{\text{MAC}}}$};  
  \node [below of=sum1, node distance = 10mm] (N1) {$Z_1^{n_{\text{BC}}}$};
  \node [below of=sum2, node distance = 10mm] (N2) {$Z_2^{n_{\text{BC}}}$};    
  \node [coordinate, node distance = 15mm,below of=W1] (corner2){};
  \node [coordinate, node distance = 15mm,right of=corner2] (corner3){};
  \node [coordinate, node distance = 15mm,below of=W2] (corner4){};
  \node [coordinate, node distance = 15mm,left of=corner4] (corner5){};
  \draw [->] (W1) --   (E1);
  \draw [->] (W2) -- (E2);
  \draw [->] (E1) -- node[name=X1]{$X_1^{n_{\text{MAC}}}$} (summing);
  \draw [->] (E2) -- node[name=X2, above]{$X_2^{n_{\text{MAC}}}$} (summing);
  \draw [->] (summing) -- node[name=Yr]{$Y_r^{n_{\text{MAC}}}$} (Q);
  \draw [->] (Q) -- node[name=Yrhat]{$\hat{Y}_r^{n_{\text{MAC}}}, q$} (E3);
  \draw [-] (E3) -- node[name=Xr]{$X_r^{n_{\text{BC}}}$} (corner);
  \draw [->] (corner) -- (sum1);
  \draw [->] (corner) -- (sum2);
  \draw [->] (sum1) -- node[above, name=Y1]{$Y_1^{n_{\text{BC}}}$} (D1);
  \draw [->] (sum2) -- node[name=Y2]{$Y_2^{n_{\text{BC}}}$} (D2);
  \draw [->] (D1) -- (W2h);
  \draw [->] (D2) -- (W1h);
  \draw [->] (Nr) -- (summing);
  \draw [->] (N1) -- (sum1);
  \draw [->] (N2) -- (sum2);
  \draw [dotted,->] (W1)  -| (D1);
  \draw [dotted,->] (W2)  -| (D2);
\end{tikzpicture}
}
\end{center}
\caption{System block diagram.}
\label{fig:block}
\end{figure}

\section{Achievable Rates}
\label{sec:theory}

\subsection{Noisy Network Coding / Joint Decoding}
In \cite{schnurr2008coding}, an achievable rate region was derived that matches the rates achieved with Noisy Network Coding (NNC) \cite{lim2011noisy}.
The closure of the achievable rate region is given by the set $\mc R_{\text{NNC}} \subset \mathbb R_+^2$ of rate tuples $(R_1,R_2)$ satisfying
\begin{align}
 R_1 & \leq \alpha I(X_1; \hY|X_2,U) \nonumber \\
 R_1 & \leq (1-\alpha) I(X_r;Y_2) - \alpha I(Y_r;\hY|X_1, X_2, U) \nonumber \\
 R_2 & \leq \alpha I(X_2; \hY|X_1, U) \nonumber \\
 R_2 & \leq (1-\alpha) I(X_r;Y_1) - \alpha I(Y_r;\hY|X_1, X_2, U) 
\end{align}
for some $p(u) p(x_1|u) p(x_2|u) p(y_r|x_1,x_2) p(\hy|y_r)$ and $p(x_r) p(y_1,y_2|x_r)$ and $\alpha >0$.
It suffices to consider $|\mc{\hat Y}_r| \leq |\mc Y_r| + 2$,  $|\mc U|\leq 3$.

\subsubsection{Receivers}
To achieve a rate tuple in $\mc R_{\text{NNC}}$, the decoder must jointly decode the BC code and the MAC code in a single stage decoder using its own message as side information. The quantization index $q$ is not required to be decoded. 
The decoder structure is shown in Fig.~\ref{fig:decNNC}.

\subsubsection{Gaussian Case}
For the Gaussian case, $X_1 \sim \mc N(0,P_1)$, $X_2 \sim \mc N(0,P_2)$, $X_r \sim \mc N(0,P_r)$ and $Y_r \sim \mc N(0,P_1+P_2+N_r)$.
We choose a (not necessarily optimal) quantizer yielding $\hY = Y_r + \hat Z$, where the quantization noise $\hat Z \sim \mc N(0,\hat N)$ is independent of $Y_r$.
Define $C(x):=\frac{1}{2} \log (1+x)$. 
With $\mc U = \emptyset$, the achievable rate region becomes
\begin{align}
 R_1 & \leq \min\left\{ \alpha C\left(  \frac{P_1}{N_r + \hat N}\right), (1-\alpha) C\left(  \frac{P_r}{N_2} \right) - \alpha C \left(  \frac{N_r}{\hat N} \right)\right\} \nonumber \\
 R_2 & \leq \min\left\{ \alpha C\left(  \frac{P_2}{N_r + \hat N}\right), (1-\alpha) C \left(  \frac{P_r}{N_1} \right) - \alpha C \left(  \frac{N_r}{\hat N} \right)\right\} .
\label{eq:NNCGauss}
\end{align}

\begin{figure}
{
\begin{subfigure}[t]{0.33\linewidth}
\centering
\begin{tikzpicture}[auto,node distance=2cm]
  \tikzstyle{every node}=[font=\footnotesize ]
  \node (W1) {$\wone$};
  \node [below of=W1, node distance = 8mm, label=below:{$\hat{X}_2^{n_{\text{MAC}}}$}] (W2h) {$\wtwohat$};
  \node [block2, right of= W2h, node distance = 15mm] (Dec1) {Joint Dec.};
  \node [right of=Dec1, node distance = 14mm] (Y1) {$Y_1^{n_{\text{BC}}}$};

    \node [draw,rectangle, rounded corners, black, dashed, fit = (Dec1) , label=below:{Dec. 1}](dd) {};
   \draw [->] (Y1) --  (Dec1);
   \draw [->] (Dec1) -- (W2h);

   \draw [dotted,->] (W1)  -| (Dec1);

\end{tikzpicture}
 \caption{Decoder for $\mc R_{\text{NNC}}$. }
 \label{fig:decNNC}
\end{subfigure}
\hspace{0.5cm}
\begin{subfigure}[t]{0.67\linewidth}
\centering
\begin{tikzpicture}[auto,node distance=2cm]
  \tikzstyle{every node}=[font=\footnotesize ]
  \node (W1) {$\wone$};
  \node [below of=W1, node distance = 8mm, label=below:{$\hat{X}_2^{n_{\text{MAC}}}$}] (W2h) {$\wtwohat$};
  \node [block2, right of= W2h, node distance = 15mm,text width=1cm,align=center] (MACDec) {MAC Dec.};
  \node [block2, right of= MACDec, node distance = 15mm,text width=0.8cm,align=center] (BCDec) {BC Dec.};
  \node [right of=BCDec, node distance = 12mm] (Y1) {$Y_1^{n_{\text{BC}}}$};

  \node [draw,rectangle, rounded corners, black, dashed, fit = (MACDec) (BCDec) , label=below:{Dec. 1}] (dd){};
   \draw [->] (Y1) --  (BCDec);
   \draw [->] (MACDec) -- (W2h);
   \draw [->] (BCDec) -- node[name=q] {$\hat q$} (MACDec);
   \draw [dotted,->] (W1)  -| (BCDec);
   \draw [dotted,->] (W1)  -| (MACDec);
\end{tikzpicture}
\caption{Decoder  for  $\mc R_{\text{CF}}$.}
\label{fig:decCF}
\end{subfigure}

\begin{subfigure}[b]{0.90\linewidth}
\centering 
\begin{tikzpicture}[auto,node distance=2cm]
  \tikzstyle{every node}=[font=\footnotesize ]
  \node (W1) {$\wone$};
  \node [below of=W1, node distance = 8mm, label=below:{$\hat{X}_2^{n_{\text{MAC}}}$}] (W2h) {$\wtwohat$};
  \node [block2, right of= W2h, node distance = 15mm,text width=1cm,align=center] (MACDec) {MAC Dec.};
  \node [block2, right of= MACDec, node distance = 15mm,text width=0.8cm,align=center] (BCDec) {BC Dec.};
  \node [right of=BCDec, node distance = 15mm] (Y1) {$Y_1^{n_{\text{BC}}}$};

  \node [draw,rectangle, rounded corners, black, dashed, fit = (MACDec) (BCDec) , label=below:{Dec. 1}] (dd){};
   \draw [->] (Y1) --  (BCDec);
   \draw [->] (MACDec) -- (W2h);
   \draw [->] (BCDec) -- node[name=q] {$\hat q$} (MACDec);
   \draw [dotted,->] (W1)  -| (MACDec);
%
%
%
\end{tikzpicture}
\caption{Decoder  for $\mc R_{\text{NoSW}}$.}
\label{fig:decNoSW}
\end{subfigure} 
}
\caption{Decoder structure for different schemes.}
\label{fig:decoder}
\end{figure}
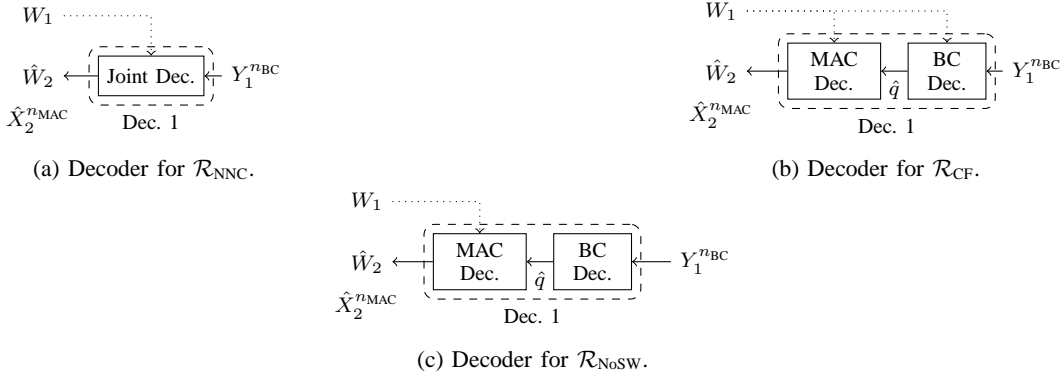

\subsection{Compress \& Forward}
\label{sec:CF}
In the spirit of classic Compress \& Forward (CF), the authors of \cite{schnurr2007achievable,kim2011achievable} derive an achievable rate region generalizing \cite{rankov2006achievable} using ideas from \cite{tuncel2006slepian}.
The achievable rate region is the set $\mc R_{\text{CF}} \subset \mathbb R_+^2$ of rate tuples $(R_1,R_2)$ satisfying
\begin{align}
R_1 \leq \alpha I(X_1;\hY|X_2,U) ,& \quad
R_2 \leq \alpha I(X_2;\hY|X_1,U) \nonumber 
\\
\text{for}\quad \alpha I(Y_r ; \hY | X_2,U) &\leq (1-\alpha) I(X_r; Y_2) \nonumber \\
\alpha I(Y_r ; \hY | X_1,U) &\leq (1-\alpha) I(X_r; Y_1) 
\end{align}
for some $p(u)p(x_1|u)p(x_2|u)p(y_r|x_1, x_2)p(\hy|y_r)$ and $p(x_r)p(y_1,y_2|x_r)$, $\alpha > 0$. 
It suffices to consider $|\mc{\hat Y}_r| \leq |\mc Y_r| + 3$,  $|\mc U|\leq 4$.

\subsubsection{Receivers}
The coding scheme of \cite{schnurr2007achievable} requires reliable decoding of the quantization index $q$ at the receiver. For that, the BC code is decoded using the own message as a priori knowledge. Knowing $q$, the desired message is decoded, again using the own message as side information.
The structure of this decoder is shown in Fig.~\ref{fig:decCF}.

\subsubsection{Gaussian Case}
With the same assumptions as before, i.e. $\hY = Y_r + \hat Z$, one obtains:
\begin{align}
 R_1 \leq  \alpha C\left(\frac{P_1}{N_r + \hat N} \right)  ,& \quad
 R_2 \leq  \alpha C \left(  \frac{P_2}{N_r + \hat N} \right) \nonumber
\\
\text{for} \quad \alpha C\left(  \frac{P_1+N_r}{\hat N} \right) \leq & (1-\alpha) C \left(  \frac{P_r}{N_2} \right) \nonumber\\
\alpha C\left(  \frac{P_2+N_r}{\hat N} \right) \leq & (1-\alpha) C \left(  \frac{P_r}{N_1} \right).
\end{align}

\subsection{Neglecting Side Information in the Downlink}
\label{sec:scheme_NoSW}
The coding scheme for CF requires Slepian-Wolf (SW) coding \cite{slepian1973noiseless} at the relay to reduce the downlink rate. In the symmetric case we do not expect this reduction to be substantial. 
We are thus interested in schemes without SW coding in the BC phase.
Using random coding arguments, the achievable rates are given by the set $\mc R_{\text{NoSW}} \subset \mathbb R_+^2$ of rate tuples $(R_1,R_2)$ satisfying
\begin{align}
 R_1 \leq \alpha I(X_1;\hY|X_2, U), \quad R_2 \leq \alpha I(X_2;\hY|X_1, U) \nonumber 
\\
\text{for } \alpha I(Y_r ; \hY |U ) \leq (1-\alpha) \min\{ I(X_r; Y_2), I(X_r; Y_1)\}  \label{eq:RRnoSI}
\end{align}
for some $p(u) p(x_1|u) p(x_2|u) p(y_r|x_1,x_2) p(\hy|y_r)$ and $p(x_r)p(y_1,y_2|x_r)$, $\alpha > 0$.
Similarly, we have $|\mc U| \leq 4$ and $|\mc{\hat Y}_r| \leq |\mc Y_r| + 3$.
A proof of this claim can be found in 
Appendix \ref{sec:achievability}.

\subsubsection{Receivers}
The structure of the decoder is shown in Fig.~\ref{fig:decNoSW}.
Similar to the scheme in Sect.~\ref{sec:CF}, two decoding stages are required: First the the BC code is decoded, revealing $q$ reliably. Then $q$ is used to obtain the desired message from the MAC code. In contrast to before, the own message is used only in the MAC decoder.

\subsubsection{Gaussian Case}
The rate region is described by
\begin{align}
 R_1 \leq  \alpha C\left( \frac{P_1}{N_r + \hat N} \right) , &\quad
 R_2 \leq  \alpha C \left(  \frac{P_2}{N_r + \hat N} \right) \nonumber
\\
\alpha C \left(  \frac{P_1+P_2+N_r}{\hat N} \right) \leq  & (1-\alpha) \min \left\{ C\left(  \frac{P_r}{N_2} \right), C \left( \frac{P_r}{N_1} \right)   \right\} \nonumber 
\end{align}

\subsection{Sum-Rate Comparison for Gaussian Case}
\label{sec:sum_rate_comparison}
Note that in general $\mc R_{\text{NoSW}} \subset \mc R_{\text{CF}} \subset \mc R_{\text{NNC}}$.
We focus on the maximal sum rate $R_1 + R_2$ for each scheme in the symmetric Gaussian case $P_1 = P_2=P$, $N_1 = N_2=N$.
This requires to jointly optimize over the quantization noise variance $\hat N$ and time allocation $\alpha$. 
Formally, define 
\begin{align}
 R^{\text{sum}}_{\text{NNC}}(\alpha):=\max_{\hat N} (R_1+R_2), \quad (R_1,R_2)\in \mc R_{\text{NNC}} \nonumber
\end{align}
 and similarly for $R^{\text{sum}}_{\text{CF}}(\alpha)$ and $R^{\text{sum}}_{\text{NoSW}}(\alpha)$.
The optimal sum rate is
\begin{align}
 R^{\text{sum}}(\alpha) = 2 \alpha \mc C\left(\frac{P}{N_r + \hat N^*(\alpha)} \right) \label{eq:Rsum_alpha}
\end{align}
for all three schemes. The only difference is the optimal value of $\hat N$ for a given $\alpha$, denoted by $\hat N^*(\alpha)$.
For $\mc R_{\text{NoSW}}$, we have
$$\hat N^*_{\text{NoSW}}(\alpha) = \frac{2 P + N_r}{\left(1 + P_r/N \right)^{(1-\alpha)/\alpha}-1}$$
because the sum rate is decreasing in $\hat N$ and $\hat N^*_{\text{NoSW}}(\alpha)$ is the smallest variance satisfying the constraints.
Similarly, for $\mc R_{\text{CF}}$, we have
$$\hat N^*_{\text{CF}}(\alpha) = \frac{P + N_r}{\left(1 + P_r/N \right)^{(1-\alpha)/\alpha}-1}.$$

As $\hat N^*_{\text{CF}}(\alpha) < \hat N^*_{\text{NoSW}}(\alpha)$, we have $R^{\text{sum}}_{\text{CF}}(\alpha) > R^{\text{sum}}_{\text{NoSW}}(\alpha)$.
For $\mc R_{\text{NNC}}$, the rate expressions for NNC in Eq.~(\ref{eq:NNCGauss}) are either increasing or decreasing in $\hat N$. The maximum with respect to $\hat N$ is thus found at the crossing point of both expressions. It is not hard to show that $\hat N^*_{\text{NNC}}(\alpha)=\hat N^*_{\text{CF}}(\alpha)$ (see 
Appendix \ref{sec:optimal_hatN}
for a derivation). This means, given the assumption that $(Y_r,\hY)$ are jointly Gaussian, NNC does not provide higher sum rates than CF in the symmetric case.

Fig.~\ref{fig:SumRateOverSNR_offset} shows achievable sum rates over SNR for the symmetric Gaussian case. 
For each curve, the value of $\alpha$ was chosen to maximize the sum rate.
As expected for this setup, the difference between $R^{\text{sum}}_{\text{CF}}(\alpha^*)=R^{\text{sum}}_{\text{NNC}}(\alpha^*)$ and $R^{\text{sum}}_{\text{NoSW}}(\alpha^*)$ is small. 
The scheme corresponding to $R^{\text{sum}}_{\text{NoSW}}(\alpha^*)$ requires less complex relay operations.
Therefore we focus on this scheme, accepting a slightly smaller achievable sum rate.
Also note that we assume that $\mc U = \emptyset$.
\begin{figure}[t]
\centering
\vspace*{-3mm}
\includegraphics[width=0.6\columnwidth]{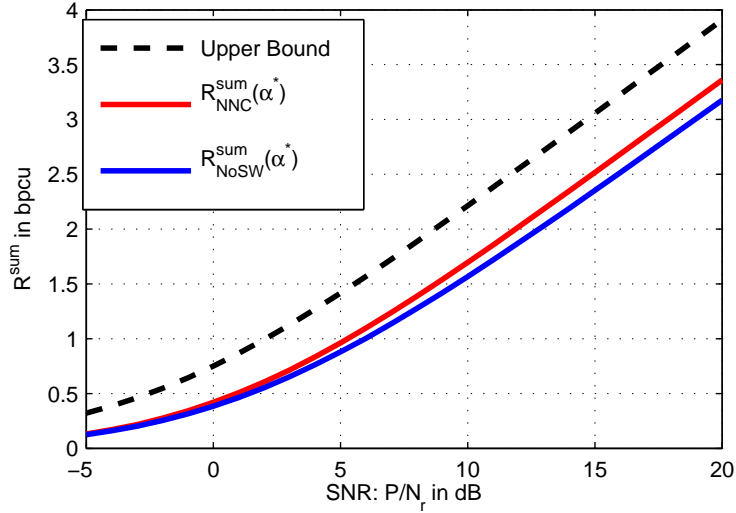}
\caption{Sum rates $R^{\text{sum}}_{\text{CF}}(\alpha^*)=R^{\text{sum}}_{\text{NNC}}(\alpha^*)$ and $R^{\text{sum}}_{\text{NoSW}}(\alpha^*)$ as a function of SNR$:=P/N_r$ and $P_r = P +8.5$dB. The upper bound is $\min\{2\alpha \mc C(P/N_r), 2(1-\alpha) \mc C(P_r/N)\}$.}
\label{fig:SumRateOverSNR_offset}
\end{figure}

\section{Quantizer Design}
\label{sec:QuantizerDesign}

\subsection{Sum-Rate optimization for $\mc R_{\text{NoSW}} $}
\label{sec:quant_design_sum_rate}
To guide the quantizer design for DMCs and fixed discrete input distributions, we want to find the optimal conditional probability mass function (pmf) $p(\hy|y_r)$ and time sharing coefficient $\alpha$ to optimize the sum-rate. 
With $C~:=~\min\{ I(X_r; Y_2), I(X_r; Y_1)\}$ as the downlink capacity, this problem can be stated as
\begin{align}
 R^{\text{sum}}_{\text{NoSW}} := \sup_{\alpha, p(\hy|y_r)} &\alpha \left( I(X_1;\hY|X_2) + I(X_2;\hY|X_1) \right) \nonumber \\
 \text{subject to: } \quad &\alpha I(Y_r ; \hY) \leq (1-\alpha) C.
\label{eq:sum_rate_opt_noSI}
\end{align}
Abbreviate $p(\hy|y_r)$ by $p$ and let $p_{ij}:=p(\hat{y}_{ri}|y_{rj})$:
Denote the objective as $J(p):=I(X_1;\hY|X_2) + I(X_2; \hY|X_1)$ and define the function 
\begin{align}
 I(C) := \sup_{p(\hy|y_r): I(Y_r; \hY) \leq C}  J(p).
\label{eq:def_I(C)}
\end{align}
Problem (\ref{eq:sum_rate_opt_noSI}) can be stated as 
\begin{equation}
  R^{\text{sum}}_{\text{NoSW}} = \sup_{\alpha} \alpha I\left(\frac{1-\alpha}{\alpha} C  \right). \label{eq:I_dependent_on_alpha}
\end{equation}
Some properties of $I(C)$ are as follows:


\begin{enumerate}

\item The functions $I(X_1;\hY|X_2) + I(X_2; \hY|X_1)$ and $ I(Y_r;\hY)$ are
convex in $p(\hy|y_r)$ \cite[Theorem 2.7.4]{cover2006elements}. 
Program (\ref{eq:def_I(C)}) is thus a convex maximization, a difficult problem in general.
From the \emph{maximum principle} \cite[Cor. 32.3.2]{rockafellar1997convex}, it follows that for the optimal $p(\hy|y_r)$, $ I(Y_r;\hY) = C$, for $0 \leq C \leq H(Y_r)$.

\item \label{property:concave} $I(C)$ is an increasing and concave function in
$C$, for $0 \leq C \leq H(Y_r)$. The proof is along the lines of \cite{witsenhausen1975conditional}.
We refer to 
Appendix \ref{sec:derivation_propertiesIC}
 for a more detailed derivation. 
\end{enumerate}

Fig.~\ref{fig:I(C)_matlabexample}  shows one example of $I(C)$. $L = |\mc{\hat Y}_r|$ represents the number of different quantization levels.
One can see that it suffices to consider only a relatively small $L$. For example, using a mapping with $L=8$ quantization levels instead of $L=16$ causes a rate reduction of less then 0.03 in $I(C)$.

\begin{figure}[t]
 \centering
\vspace*{-3mm}
\includegraphics[width=0.6\columnwidth]{./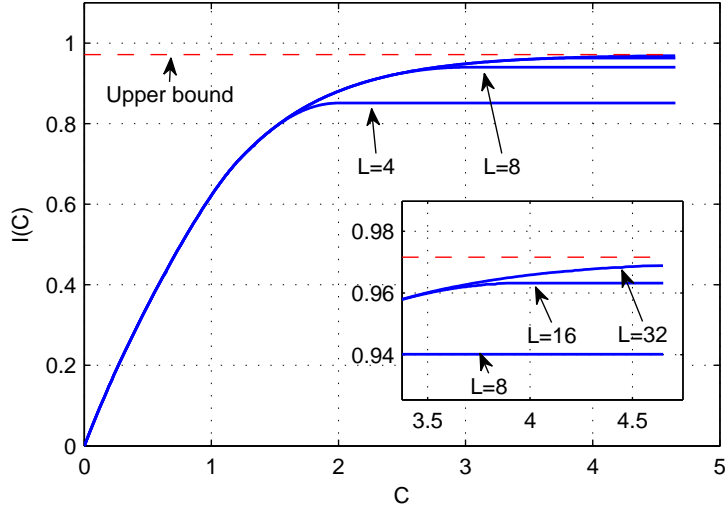}
\caption{$I(C)$ and upper bound $I(X_1;Y_r|X_2)+I(X_2;Y_r|X_1)$ for BPSK modulation at both users and $P=0$dB, $N=0$dB, for different numbers of quantization levels $L=|\mc{\hat Y}_r|$.}
\label{fig:I(C)_matlabexample}
\end{figure}

\subsection{Computing the function $I(C)$}
\label{sec:computeI(C)}
To solve the problem for $I(C)$ in Eq.~(\ref{eq:def_I(C)}), similar to \cite[Section 10]{cover2006elements} we write the Lagrangian:
\begin{align}
 \mc L(p, \lambda, \nu_1,  \ldots,\nu_{L}) 
=J(p) - \lambda I(Y_r; \hY) + \sum_j \nu_j \sum_i p_{ij}, \nonumber
\end{align}
$\lambda \geq 0$, where the last constraints force $p(\hy|y_r)$ to be a valid pmf.
The KKT conditions require
\begin{align}
\frac{\partial \mc L }{\partial p_{ij}} = \frac{\partial J}{\partial p_{ij}} - \lambda p(y_{rj}) \log \frac{p_{ij}}{p(\hat y_{ri}) }  + \nu_j \stackrel{!}{=} 0, \quad \forall~i,j.
\end{align}
With the substitution
\begin{align}
 \log \mu_j := - \frac{\nu_j}{\lambda p(y_{rj})} \quad \Leftrightarrow \quad \nu_j = - \lambda p(y_{rj}) \log \mu_j \nonumber
\end{align}
it follows that 
$
p_{ij}  = \frac{p(\hat y_{ri})}{\mu_j}  \exp\left( \frac{\frac{\partial J}{\partial p_{ij}}}{\lambda p(y_{rj})} \right) .
$
But since 
$\sum_k p_{kj} \stackrel{!}{=} 1$ for all $j$, we obtain 
$\mu_j = \sum_k p(\hat y_{rk}) \exp \left( \frac{\frac{\partial J}{\partial p_{kj}} }{\lambda p(y_{rj})} \right), \forall~j.$
\\
The optimality conditions are thus given by
\begin{align}
 p_{ij} = p(\hat y_{ri}|y_{rj}) = &\frac{p(\hat y_{ri}) \exp\left( \frac{\frac{\partial J}{\partial p_{ij}}}{\lambda p(y_{rj})} \right)  }{\sum_k p(\hat y_{rk}) \exp \left( \frac{\frac{\partial J}{\partial p_{kj}} }{\lambda p(y_{rj})} \right)}
, &\forall~i,j, \label{eq:optimality_cond_qij}\\
 p(\hat y_{ri}) = & \sum_j p(\hat y_{ri}|y_{rj}) p(y_{rj}),&\forall~i. \label{eq:optimality_marginal}
\end{align}

One can solve for a conditional pmf $p(\hy|y_r)$ satisfying Eqs.~(\ref{eq:optimality_cond_qij}, \ref{eq:optimality_marginal}) with a fixed-point-iteration with an initial value for $p$, similar to the Blahut-Arimoto algorithm (e.g. \cite[Section 10.8]{cover2006elements}).
Different initial values for $p$ can result in different outcomes. 
In practice, we start the iteration with different initial values and store the best result.

\subsection{Scalar vs. Vector quantization}
In general, a vector quantizer must be used at the relay to achieve the rate regions in Sec.~\ref{sec:theory}.
An ideal vector quantizer results in a pmf $p(\hy|y_r)$ that optimizes the problem in Eq.~(\ref{eq:def_I(C)}).
A scalar quantizer permits only  deterministic single-letter relationships, i.e. we have
\begin{align}
\label{eq:q_criteria} 
p(\hat y_{ri}|y_{rj}) = 1~\text{ for some}~ i,~ \forall~j.
\end{align}
If (\ref{eq:q_criteria}) is fulfilled for the optimal pmf, the quantizer function $\mc Q(.)$ can be directly inferred.  
In the saturation region in the curves in Fig.~\ref{fig:I(C)_matlabexample} one obtains scalar quantizers since $C > \log(L) \geq I(Y_r;\hY)$. 
In this case the constraints for problem (\ref{eq:def_I(C)}) are only those for a valid pmf. As a convex maximization is optimized at one of its corner points \cite[Cor. 32.3.2]{rockafellar1997convex}, (\ref{eq:q_criteria}) will be fulfilled.
We see that the loss by using a scalar quantizer is small for sufficiently large $L$ and proceed with this more practical method.

\subsection{Optimized Time Allocation}
$I(C)$ in Eq.~(\ref{eq:def_I(C)}) captures the optimization of the pmf $p(\hy|y_r)$ in problem (\ref{eq:sum_rate_opt_noSI}).
To optimize also the time allocation parameter $\alpha$, we must solve the problem in Eq.~(\ref{eq:I_dependent_on_alpha}).
\begin{prop}
 The function $\alpha I\left(\frac{1-\alpha}{\alpha} C\right)$ is concave in $\alpha$, where $C$ is a positive constant independent of $\alpha$.
\label{prop:concave_alpha}
\end{prop}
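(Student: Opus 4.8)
The plan is to recognize $\alpha\, I\!\left(\frac{1-\alpha}{\alpha}C\right)$ as the \emph{perspective} of the concave function $I$, evaluated along an affine path in its two arguments, and then to invoke the elementary fact that the perspective operation preserves concavity and that concavity survives composition with an affine map. This avoids any appeal to differentiability of $I$, which has not been established; only the concavity of $I$ (Property~\ref{property:concave}) is needed.

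First I would fix the domain: by Property~\ref{property:concave}, $I(\cdot)$ is concave and nondecreasing on $[0,H(Y_r)]$, and we may regard it as extended to a concave nondecreasing function on all of $\mathbb R_+$ (e.g.\ held constant at $I(H(Y_r))$ above $H(Y_r)$), so that $\frac{(1-\alpha)C}{\alpha}$ lies in its domain for every $\alpha\in(0,1]$. Next, define $h(\alpha,t):=\alpha\, I(t/\alpha)$ for $\alpha>0,\ t\ge 0$, the perspective of $I$, and show $h$ is jointly concave in $(\alpha,t)$. Concretely: given $(\alpha_1,t_1),(\alpha_2,t_2)$ and $\theta\in[0,1]$, put $\bar\alpha=\theta\alpha_1+(1-\theta)\alpha_2$, $\bar t=\theta t_1+(1-\theta)t_2$, and use the convex weights $\lambda=\theta\alpha_1/\bar\alpha$, $1-\lambda=(1-\theta)\alpha_2/\bar\alpha$; since $\lambda\frac{t_1}{\alpha_1}+(1-\lambda)\frac{t_2}{\alpha_2}=\frac{\bar t}{\bar\alpha}$, concavity of $I$ gives $I(\bar t/\bar\alpha)\ge \lambda I(t_1/\alpha_1)+(1-\lambda)I(t_2/\alpha_2)$, and multiplying through by $\bar\alpha$ yields $h(\bar\alpha,\bar t)\ge \theta\, h(\alpha_1,t_1)+(1-\theta)\, h(\alpha_2,t_2)$. (Equivalently one can cite the standard perspective construction, e.g.\ Boyd--Vandenberghe.)

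Finally, observe that $\alpha\mapsto(\alpha,(1-\alpha)C)$ is an affine map $\mathbb R\to\mathbb R^2$, and that $\alpha\, I\!\left(\frac{1-\alpha}{\alpha}C\right)=h(\alpha,(1-\alpha)C)$ is just the restriction of the jointly concave $h$ to the image of this affine map; hence it is concave in $\alpha$ on $(0,1]$, which is the claim.

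I expect the only genuine obstacle to be bookkeeping at the boundary and on the domain of $I$: one must make sure that $\frac{(1-\alpha)C}{\alpha}$ never leaves the region where $I$ was shown to be concave (handled by the extension noted above) and that $\alpha$ is bounded away from $0$ so the perspective is well defined. A brute-force route via the second derivative of $\alpha\mapsto\alpha I(C(1-\alpha)/\alpha)$ is also conceivable but is messier and, more importantly, presupposes differentiability of $I$, so the perspective argument is preferable.
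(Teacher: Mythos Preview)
Your argument is correct and takes a genuinely different route from the paper. The paper proceeds by a direct second-derivative computation: with $h(\alpha)=\frac{1-\alpha}{\alpha}C$ it expands $\frac{d^2}{d\alpha^2}\bigl(\alpha I(h(\alpha))\bigr)$ into three terms, observes the algebraic identity $h'(\alpha)=-\tfrac{\alpha}{2}h''(\alpha)$, which makes the two first-order terms cancel, and is left with $\alpha I''(h(\alpha))\,h'(\alpha)^2\le 0$. This is short, but it tacitly assumes that $I$ is twice differentiable, something the paper has not established (Property~\ref{property:concave} only yields concavity). Your perspective argument sidesteps this entirely: recognizing $\alpha I\bigl(\tfrac{(1-\alpha)C}{\alpha}\bigr)$ as the perspective of $I$ evaluated along the affine path $\alpha\mapsto(\alpha,(1-\alpha)C)$ uses only the concavity of $I$, and the domain extension you note (holding $I$ constant beyond $H(Y_r)$) is in fact already implicit in the definition $I(C)=\sup_{p:\,I(Y_r;\hat Y_r)\le C}J(p)$, since the constraint becomes vacuous for $C\ge H(Y_r)$. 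So your proof is both more rigorous under the stated hypotheses and more conceptual; the paper's calculation, while concrete, would need an additional smoothness argument to be complete.
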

\begin{proof}
 From property \ref{property:concave}) in Sec.~\ref{sec:quant_design_sum_rate} we known that 
 $I^{'}(C):=\frac{\partial I(C)}{\partial C} = \lambda \geq 0$ and 
 $I^{''}(C):=\frac{\partial^2 I(C)}{\partial C^2} \leq 0$.
Define the function $h(\alpha) := \frac{1-\alpha}{\alpha} C$ for $0<\alpha\leq 1$.
Note that
\begin{align}
 h^{'}(\alpha) := &\frac{\partial h(\alpha)}{\partial \alpha} = - \frac{1}{\alpha^2} C < 0  ~\text{and} \nonumber\\
 h^{''}(\alpha) := &\frac{\partial^2 h(\alpha)}{\partial \alpha^2} = \frac{2}{\alpha^3} C > 0 .\nonumber
\end{align}
Further note that $h^{'}(\alpha) = -\frac{\alpha}{2} h^{''}(\alpha)$
and
\begin{align}
 \frac{\partial^2 \left(\alpha I(h(\alpha))\right)}{\partial \alpha^2} = & 2 I^{'}(h(\alpha))\cdot h^{'}(\alpha) + \alpha I^{'}(h(\alpha)) \cdot h^{''}(\alpha) +  \nonumber \\
&\alpha I^{''}(h(\alpha))\cdot h^{'}(\alpha)^2.
\end{align}
As $h^{'}(\alpha) = -\frac{\alpha}{2} h^{''}(\alpha)$, the first two summands add to zero and only the last summand remains. This term is always at most $0$, proving that $\frac{\partial^2 \left(\alpha I(h(\alpha))\right)}{\partial \alpha^2} \leq 0$ for $0< \alpha \leq 1$.
\end{proof}
Prop.~\ref{prop:concave_alpha} shows that there is a unique maximizer $\alpha$ for the optimal sum-rate that can be found efficiently once $I(C)$ or an approximation of it is known.

\section{Performance Evaluation}
\label{sec:sims}
Using the method of the previous section, we obtain a mapping for the following parameters: $P= 0$dB, $P_r = 9.3$dB, $N_r = N = 0$dB, $L = 8$, $\alpha = 1/3$. The resulting quantizer fulfills the criteria in (\ref{eq:q_criteria}) (and is thus a scalar quantizer) and gives the sum-rate 0.29. This mapping is shown in Fig. \ref{fig:map3} together with $p(y_\text{r})$.
\begin{figure}[t]
\vspace*{-3mm}
\centering
\includegraphics[width=0.6\columnwidth]{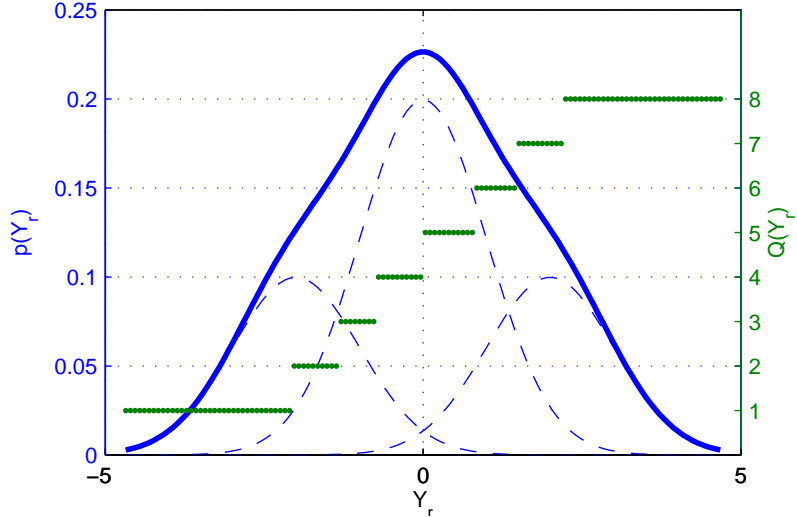}
\caption{Numerically optimized quantizer function (dotted line) and $p(y_\text{r})$ (solid line). Note that the solid curve represents the sum of the dashed curves which correspond to the pdfs conditioned on $X_{1} + X_{2}$. 
}
\label{fig:map3}
\end{figure}

We evaluate the performance of this mapping by means of numerical simulations.
As channel codes, we use the IRA-LDPC codes of the DVB-S2 standard \cite{Morello2006}. 
In the uplink, blocks of $k_1 = k_2 = 7200$ information bits are encoded with the rate 0.444 (MAC Code) to  blocks of $n_\text{MAC} = 16200$ BPSK symbols that are transmitted from \tone\, and \ttwo\,  to \tr\, during the MAC phase. The received 16200 samples at \tr\, are mapped according to the quantizer to 16200 symbols, that are represented by $16200\cdot\log(L)= 48600$ bits\footnote{In general, $\hat{Y}_r$ is not uniformly distributed and hence source coding should be performed before transmitting the indices. However, in this specific example  $H(\hat{Y}_r)\approx 3$ and therefore source coding can be omitted. }.
These bits are encoded by a channel code (BC Code) of rate $3/4$ to 64800 downlink code bits that are broadcast to \tone\, and \ttwo\, during the BC phase with $n_\text{BC} = 32400$ 4-PAM symbols. 
As a result, for the transmission of one block $n_\text{MAC} = 16200$ symbols are used in the uplink and $n_\text{BC} = 32400$ symbols are used in the downlink, which corresponds to $\alpha = 1/3$. 
The sum-rate of the system ($(k_1 + k_2)/(n_\text{MAC}+n_\text{BC})=0.29$ bits/channel use), the time sharing parameter and the transmit powers of the nodes match the optimization parameters of the quantizer. 
At the receivers, we use  the approach given in Sec.~\ref{sec:scheme_NoSW} for decoding: first, the relay quantization index $\hat{q}$ is decoded without using side information. By using $\hat{q}$ and transmitted own symbol, the LLR of the other users symbol is calculated which is fed to the MAC decoder.

During the simulation, the noise level is varied and the packet error rates are evaluated accordingly. 
Fig. \ref{fig:simper} depicts the PER vs. uplink SNR. Recall that the noise powers used in the optimization for this example are chosen as 0dB, which corresponds to an uplink SNR of 0dB. Therefore the PER is expected to approach zero at 0 dB. The gap to this theoretical limit is about 0.75dB. As the DVB-S2 LDPC codes are about 0.7-1.2dB away from the Shannon limit in point-to-point channels\cite{Morello2006} the simulations verify our quantizer design.
\begin{figure}[t]
\vspace*{-3mm}
\centering
\includegraphics[width=0.6\columnwidth]{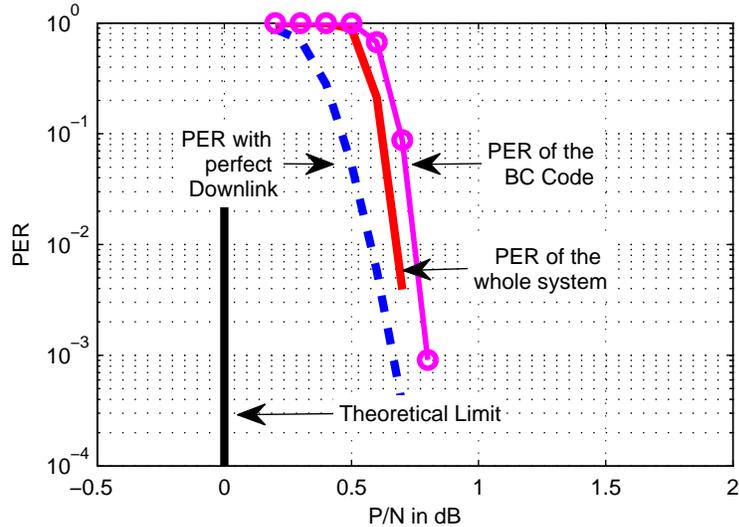}
\caption{Packet Error Rate (PER) simulations for the system with the quantizer given in Fig. \ref{fig:map3}.}
\label{fig:simper}
\vspace*{-2mm}
\end{figure}
Fig. \ref{fig:simper} also shows the PER performance of the BC code (curve with circle marker) and the PER of the system with perfect downlink channels (dashed line). These curves give insight: The gap between the dashed line and the solid line can be seen as the loss due to the imperfect BC code and the gap between the theoretical limit and the dashed line can be interpreted as the loss due to the imperfect MAC code and the quantizer loss. Moreover, it is interesting that the PER of the complete system is less than the PER of BC code. This is because even if some of the quantization indices are transmitted erroneously during the BC phase, they are corrected by the MAC code.

\section{Conclusion}
\label{sec:conclusion}
We compared different QF schemes for the separated TWRC. We showed that the loss caused by neglecting Slepian-Wolf coding in the broadcast code can be small for symmetric setups, but it allows less complex operations at the relay and decoder. A numerical method to maximize the sum-rate and guide quantizer design was derived and applied to special parameters.
We observed that the loss due to scalar instead of vector quantization is small.
Simulations support our results and show that practical schemes are close to the predicted limits.
The study asymmetric scenarios is left as future work.

\section*{Acknowledgments}
The authors are supported by the German Ministry of Education and Research in the framework of the Alexander von Humboldt-Professorship, by the grant DLR@Uni of the Helmholtz Allianz, and by the TUM Graduate School. The authors thank Prof. Gerhard Kramer for his helpful comments.

\bibliographystyle{IEEEtran}

\bibliography{reference}

\clearpage


\newcommand{\yhcard}{\ensuremath{L}}

\appendix

\subsection{Achievability Proof}
\label{sec:achievability}

Before we proove the rate region in (\ref{eq:RRnoSI}), we use the following usual definition of typical sequences. Notation and definitions essentially follow \cite{kramer2007topics,el2010lecture}. The analysis follows the style of \cite{schnurr2007achievable,schnurr2008coding}. 

Let $x^n$ be a sequence with each element $x_i$ drawn from a finite alphabet $\mc X$. The number of elements in $x^n$ taking the letter $a \in \mc X$ is denoted by $N(a|x^n)$.

Define the typical set $\mc T_\epsilon^n(X)$ as all sequences $x^n$ satisfying
\begin{equation}
 \left| \frac{1}{n} N(a|x^n) - P_X(a)  \right| \leq \epsilon \cdot P_X(a) \qquad \forall a \in \mc X
\end{equation}
A sequence $x^n \in \mc T_\epsilon^n(X)$ is called \emph{$\epsilon$-letter-typical} or just \emph{typical} with respect to $P_X(\cdot)$. Similarly for joint distributions and joint typicality.

\subsubsection{Random Codebook Generation}

Define $n_{\text{MAC}} = \alpha n>0$, $n_{\text{BC}} = (1-\alpha) n>0$.

\begin{itemize}
 \item Choose $u^{n_{\text{MAC}}}$ according to $\prod_{k=1}^{n_{\text{MAC}}} P_U(u_i)$.
 \item Randomly and independently generate $2^{nR_1}$ codewords according to $\prod_{k=1}^{n_{\text{MAC}}} P_{X_1|U}(x_{1k}|u_k)$. Label them $x_1^{n_{\text{MAC}}}(w_1)$ with $w_1 \in \{1,2,\ldots , 2^{nR_1}\}$. 
 \item Randomly and independently generate  $2^{nR_2}$ codewords according to $\prod_{k=1}^{n_{\text{MAC}}} P_{X_2|U}(x_{2k}|u_k)$. Label them $x_2^{n_{\text{MAC}}}(w_2)$ with $w_2 \in \{1,2,\ldots , 2^{nR_2}\}$.
 \item Randomly and independently generate  $2^{n_{\text{MAC}} R_Q}$ codewords according to $\prod_{k=1}^{n_{\text{MAC}}} P_{\hY|U}(\hat{y}_{rk}|u_k)$. Label them $\hat y_{r}^{n_{\text{MAC}}}(i)$ with $i \in \{1,2,\ldots , 2^{n_{\text{MAC}} R_Q}\}$.
 \item Randomly and independently generate  $2^{n_{\text{MAC}} R_Q}$ codewords  according to $\prod_{k=1}^{n_{\text{BC}}} P_{X_r}(x_{rk})$. Label them $x_{r}^{n_{\text{BC}}}(i)$ with $i \in \{1,2,\ldots , 2^{n_{\text{MAC}} R_Q}\}$.
\end{itemize}
Reveal all these codebooks to all nodes.

\subsubsection{Coding}
\begin{enumerate}
 \item In order to transmit message $w_1$, node 1 sends $x_1^{n_{\text{MAC}}}(w_1)$.
 \item In order to transmit message $w_2$, node 2 sends $x_2^{n_{\text{MAC}}}(w_2)$.
 \item When the relay receives $y_r^{n_{\text{MAC}}}$, it looks for the first index $i$ such that $(y_r^{n_{\text{MAC}}}, \hat y_r^{n_{\text{MAC}}}(i)) \in \mc T_{\epsilon_1}^{n_{\text{MAC}}} (Y_r, \hY | u^{n_{\text{MAC}}})$. If it does not find such an index, $i=1$.
The relay transmits $x_r^{n_{\text{BC}}}(i)$.
 \item When node 1 receives $y_1^{n_{\text{BC}}}$, it looks for the unique $i$ such that $(x_r^{n_{\text{BC}}}(i),y_1^{n_{\text{BC}}}) \in \mc T_{\epsilon_2}^{n_{\text{BC}}} (X_r, Y_1)$. If no or more than one such $i$ is found, node 1 chooses $\hat w_2 = 1$. Node 1 thus knows $\hat y_r^{n_{\text{MAC}}}(i)$.
 \item Node 1 decides for the unique $\hat w_2$ that satisfies $(x_1^{n_{\text{MAC}}}(w_1), x_2^{n_{\text{MAC}}}(\hat w_2), \hat y_r^{n_{\text{MAC}}}(i)) \in \mc T_{\epsilon_3}^{n_{\text{MAC}}}(X_1, X_2,\hY|u^{n_{\text{MAC}}})$. If no or more than one such index is found, node 1 sets $\hat w_2 = 1$. 
 \item Steps 4 and 5 similarly for node 2.
\end{enumerate}

\subsubsection{Error events}

\begin{enumerate}
 \item[$\mc E_1$] is the event that coding step 3 fails. 
\\That is, $\nexists i \in \{1,2,\ldots , 2^{nR_Q}\}$ such that $(y_r^{n_{\text{MAC}}}, \hat y_r^{n_{\text{MAC}}}(i)) \in \mc T_{\epsilon_1}^{n_{\text{MAC}}} (Y_r, \hY | u^{n_{\text{MAC}}})$. This is not an intrinsic error event but simplifies the analysis.

 \item[$\mc E_2$] is the event that coding step 4 fails. 
\\That is,  $(x_r^{n_{\text{BC}}}(i),y_1^{n_{\text{BC}}}) \not \in \mc T_{\epsilon_2}^{n_{\text{BC}}} (X_r, Y_1)$ or  $(x_r^{n_{\text{BC}}}(j),y_1^{n_{\text{BC}}}) \not \in \mc T_{\epsilon_2}^{n_{\text{BC}}} (X_r, Y_1)$ with $i \not = j$.

 \item[$\mc E_3$] is the event that coding step 5 fails. 
\\Assume $x_1^{n_{\text{MAC}}}(w_1)$ and $x_2^{n_{\text{MAC}}}(w_2)$ are sent and the relay chooses the index $i$ such that $(y_r^{n_{\text{MAC}}}, \hat y_r^{n_{\text{MAC}}}(i)) \in \mc T_{\epsilon_1}^{n_{\text{MAC}}} (Y_r, \hY | u^{n_{\text{MAC}}})$. 
$\mc E_3$ denotes the event that $(x_1^{n_{\text{MAC}}}(w_1), x_2^{n_{\text{MAC}}}(w_2), \hat y_r^{n_{\text{MAC}}}(i) ) \not \in \mc T_{\epsilon_3}^{n_{\text{MAC}}}(X_1, X_2,\hY|u^{n_{\text{MAC}}})$ or $(x_1^{n_{\text{MAC}}}(w_1), x_2^{n_{\text{MAC}}}(\hat w_2), \hat y_r^{n_{\text{MAC}}}(i) ) \in \mc T_{\epsilon_3}^{n_{\text{MAC}}}(X_1, X_2,\hY|u^{n_{\text{MAC}}})$ with $\hat w_2 \not = w_2$.
\end{enumerate}

The error events and achievable rates for node 1 are similar.

It is not hard to check that the probability of error $P(\mc E)$ is bounded by the probability of the union of these error events, i.e.
$$ P(\mc E) \leq P\left( \bigcup_{k=1}^3 \mc E_k \right) \leq \sum_{k=1}^3 \mc E_k.$$

\subsubsection{Error analysis}

\underline{Event $\mc E_1$:}

Joint typicality of $(y_r^{n_{\text{MAC}}}, \hat y_r^{n_{\text{MAC}}}(i)) $ requires $y_r^{n_{\text{MAC}}} \in \mc T_{\epsilon_{1'}}(Y_r|u^{n_{\text{MAC}}})$. However, $P\{ y_r^{n_{\text{MAC}}} \not \in \mc T_{\epsilon_{1'}}(Y_r|u^{n_{\text{MAC}}}) \} \rightarrow 0$ for $n_{\text{MAC}}~\rightarrow~\infty$ .

As the codewords $\hat y_r^{n_{\text{MAC}}}(i)$, $i \in \{1,2,\ldots , 2^{n_{\text{MAC}} R_Q}\}$ are drawn i.i.d., 
\begin{align}
 P(\mc E_1) = &P \{ \nexists i: (y_r^{n_{\text{MAC}}}, \hat y_r^{n_{\text{MAC}}}(i)) \in \mc T_{\epsilon_1}^{n_{\text{MAC}}} (Y_r, \hY | u^{n_{\text{MAC}}}) \} = \nonumber 
\\
 &=\sum_{y_r^{n_{\text{MAC}}} \in \mc T_{\epsilon_{1'}^{n_{\text{MAC}}},}(Y_r|u^{n_{\text{MAC}}})} P_{Y_r^{n_{\text{MAC}}}|U^{n_{\text{MAC}}}}(y_r^{n_{\text{MAC}}}|u^{n_{\text{MAC}}}) \left[ 1 - P\{ (y_r^{n_{\text{MAC}}}, \hat Y_r^{n_{\text{MAC}}} ) \in \mc T_{\epsilon_1}^{n_{\text{MAC}}}(Y_r,\hY|u^{n_{\text{MAC}}} ) \} \right]^{2^{n_{\text{MAC}} R_Q}} 
\\
&\stackrel{(a)}{\leq} \sum_{y_r^{n_{\text{MAC}}} \in \mc T_{\epsilon_{1'}^{n_{\text{MAC}}},}(Y_r|u^{n_{\text{MAC}}})} P_{Y_r^{n_{\text{MAC}}}|U^{n_{\text{MAC}}}}(y_r^{n_{\text{MAC}}}|u^{n_{\text{MAC}}}) \left[ 1 - 2^{-n_{\text{MAC}} (I(Y_r;\hY|u^{n_{\text{MAC}}}) + \delta(\epsilon_1))} \right]^{2^{n_{\text{MAC}} R_Q}}
\\
&\stackrel{(b)}{\leq} \sum_{y_r^{n_{\text{MAC}}} \in \mc T_{\epsilon_{1'}^{n_{\text{MAC}}},}(Y_r|u^{n_{\text{MAC}}})} P_{Y_r^{n_{\text{MAC}}}|U^{n_{\text{MAC}}}}(y_r^{n_{\text{MAC}}}|u^{n_{\text{MAC}}}) \exp\left(- 2^{n_{\text{MAC}} (R_Q - (I(Y_r;\hY|U) + \delta(\epsilon_1)))} \right)
\\
&\stackrel{(c)}{\leq} \exp\left(- 2^{n_{\text{MAC}} (R_Q - (I(Y_r;\hY|U) + \delta(\epsilon_1)))} \right) = \exp\left(- 2^{n \alpha (R_Q - (I(Y_r;\hY|U) + \delta(\epsilon_1)))} \right)
\end{align}

where 
\\$(a)$ follows from the joint typicality lemma \cite{el2010lecture} bounding
the probability that a given $y_r^{n_{\text{MAC}}}$ is jointly typical with a randomly independently sampled codeword $\hat y_r^{n_{\text{MAC}}}(i)$ by
\begin{align}
 P\{ (y_r^{n_{\text{MAC}}}, \hat Y_r^{n_{\text{MAC}}} ) \in \mc T_{\epsilon_1}^{n_{\text{MAC}}}(Y_r,\hY|u^{n_{\text{MAC}}} ) \} \geq 2^{-n_{\text{MAC}} (I(Y_r;\hY|U) + \delta(\epsilon_1))}.
\end{align}
$(b)$ follows from $(1-x)^n \leq \exp(-n\cdot x)$ \cite{cover2006elements}.\\
$(c)$ is due to $\sum_{y_r^{n_{\text{MAC}}} \in \mc T_{\epsilon_{1'}}(Y_r|u^{n_{\text{MAC}}})} P_{Y_r^{n_{\text{MAC}}}|U^{n_{\text{MAC}}}}(y_r^{n_{\text{MAC}}}|u^{n_{\text{MAC}}}) < 1$

In the limit it follows that $P(\mc E_1) \rightarrow 0$ for $n \rightarrow \infty$ for $\epsilon_1 > \epsilon_{1'} \geq 0$ if
\begin{equation}
 \boxed{ \alpha R_Q > \alpha I(Y_r;\hY|U) .}
\end{equation}

\underline{Event $\mc E_2$:}

This is the classical proof of the channel coding theorem. We can split this event in  two subevents $\mc E_{21}$ and $\mc E_{22}$ with $P(\mc E_2) \leq P(\mc E_{21}) + P(\mc E_{22})$.
Define 
\begin{align}
\mc E_{21} :=& \{ (x_r^{n_{\text{BC}}}(i),y_1^{n_{\text{BC}}}) \not \in \mc T_{\epsilon_2}^{n_{\text{BC}}} (X_r, Y_1) \} \nonumber\\
\mc E_{22} :=&\{ (x_r^{n_{\text{BC}}}(j), y_1^{n_{\text{BC}}}) \in \mc T_{\epsilon_2}^{n_{\text{BC}}}(X_r, Y_1) \} \text{ for some } j \not = i \nonumber
\end{align}

Since $(X_r^{n_{\text{BC}}}(i), Y_1^{n_{\text{BC}}}) \sim \prod_{k=1}^{n_{\text{BC}}} P_{X_r,Y_1|U}(x_{rk}, y_{1k}|u_k) $, by the law of large numbers
\begin{align}
 P(\mc E_{21}) \rightarrow 0 \text{ for } n_{\text{BC}} \rightarrow \infty .
\end{align}

For $\mc E_{22}$, by symmetry, we focus on the case that $x_r^{n_{\text{BC}}}(1)$ was transmitted. 
\\Note that $(X_r^{n_{\text{BC}}}(j), Y_1^{n_{\text{BC}}}) \sim \prod_{k=1}^{n_{\text{BC}}} P_{X_r|U}(x_{rk}|u_k) P_{Y_1|U}(y_{1k}|u_k) $ for $i \not = j$.

\begin{align}
 P(\mc E_{22}) &\leq \sum_{j=2}^{2^{n_{\text{MAC}} R_Q}} P \{ (x_r^{n_{\text{BC}}}(j), y_1^{n_{\text{BC}}}) \in \mc T_{\epsilon_2}^{n_{\text{BC}}}(X_r, Y_1) \} \nonumber 
\\
&\stackrel{(a)}{\leq} 2^{n_{\text{MAC}} R_Q} 2^{-n_{\text{BC}} (I(X_r;Y_1) - \delta(\epsilon_2) )} = 2^{-n (1-\alpha (I(X_r;Y_1) -\delta(\epsilon_2)) -\alpha R_Q )}
\end{align}

where
\\$(a)$ follows from the joint typicality lemma stating
$$ P \{ (x_r^{n_{\text{BC}}}(j), y_1^{n_{\text{BC}}}) \in \mc T_{\epsilon_2}^{n_{\text{BC}}}(X_r, Y_1) \} \leq 2^{-n_{\text{BC}} (I(X_r;Y_1) - \delta(\epsilon_2))}.$$
Concludingly, in the limit $P(\mc E_2) \rightarrow 0$ for $n \rightarrow \infty$ for $\epsilon_2 > 0$ if
\begin{equation}
 \boxed{\alpha R_Q < 1-\alpha I(X_r;Y_1) .}
\end{equation}

\underline{Event $\mc E_3$:}

Again, we can split this into two subevents $\mc E_{31}$ and $\mc E_{32}$ with $P(\mc E_3) \leq P(\mc E_{31}) + P(\mc E_{32})$.
Define
\begin{align}
\mc E_{31} :=& \{ (x_1^{n_{\text{MAC}}}(w_1), x_2^{n_{\text{MAC}}}(w_2), \hat y_r^{n_{\text{MAC}}}(i)) \not \in \mc T_{\epsilon_3}^{n_{\text{MAC}}}(X_1,X_2,\hY|u^{n_{\text{MAC}}}) \} \nonumber\\
\mc E_{32} :=&\{ (x_1^{n_{\text{MAC}}}(w_1), x_2^{n_{\text{MAC}}}(\hat w_2), \hat y_r^{n_{\text{MAC}}}(i))  \in \mc T_{\epsilon_3}^{n_{\text{MAC}}}(X_1,X_2,\hY|u^{n_{\text{MAC}}}) \} \text{ for some } \hat w_2 \not = w_2 \nonumber
\end{align}

From the coding scheme, $\hY^{n_{\text{MAC}}} = \mc Q(Y_r^{n_{\text{MAC}}})$ and thus $(X_1,X_2) \rightarrow Y_r \rightarrow \hY$.

For $\mc E_{31}$, by the law of large numbers, $P\{ (x_1^{n_{\text{MAC}}}(w_1), x_2^{n_{\text{MAC}}}(w_2), y_r^{n_{\text{MAC}}}) \in \mc T_{\epsilon_{3'}}^{n_{\text{MAC}}}(X_1,X_2,Y_r|u^{n_{\text{MAC}}}) \} \rightarrow 1$ for $n_{\text{MAC}} \rightarrow \infty$.
By the Markov Lemma \cite{el2010lecture}, typicality of $(x_1^{n_{\text{MAC}}}(w_1), x_2^{n_{\text{MAC}}}(w_2), y_r^{n_{\text{MAC}}})$ implies
\begin{align}
 P\{ (x_1^{n_{\text{MAC}}}(w_1), x_2^{n_{\text{MAC}}}(w_2), y_r^{n_{\text{MAC}}}, \hat y_r^{n_{\text{MAC}}}(i)) \in \mc T_{\epsilon_{3^*}}^{n_{\text{MAC}}}(X_1,X_2,Y_r,\hY|u^{n_{\text{MAC}}}) \} \rightarrow 1 \text{ for } n_{\text{MAC}} \rightarrow \infty
\end{align}
for $0 \leq \epsilon_{3'}  < \epsilon_{3^*}$.
A direct consequence is that 
\begin{align}
 P\{ (x_1^{n_{\text{MAC}}}(w_1), x_2^{n_{\text{MAC}}}(w_2), \hat y_r^{n_{\text{MAC}}}(i)) \not \in \mc T_{\epsilon_3}^{n_{\text{MAC}}}(X_1,X_2,\hY|u^{n_{\text{MAC}}}) \} \rightarrow 0 \text{ for } n_{\text{MAC}} \rightarrow \infty.
\end{align}
for $0 \leq \epsilon_3 < \epsilon_{3^*}$.

For $\mc E_{32}$:

This part is similar to the multi-access channel.
\\Note that $(X_1^{n_{\text{MAC}}}(w_1), X_2^{n_{\text{MAC}}}(\hat w_2), \hY^{n_{\text{MAC}}}(i)) \sim \prod_{k=1}^{n_{\text{MAC}}} P_{X_1,\hY|U}(x_{1k},\hat y_{rk}|u_k) P_{X_2|U}(x_{2k}|u_k)$ for $\hat w_2 \not = w_2$.
By introducing the random variable $V^{n_{\text{MAC}}} = (X_1^{n_{\text{MAC}}},\hY^{n_{\text{MAC}}})$ we can use the same steps as for error event $\mc E_{22}$ and conclude that

\begin{align}
 P(\mc E_{32}) &\leq \sum_{j=2}^{2^{n R_2}} P \{ (x_2^{n_{\text{MAC}}}(\hat w_2), v^{n_{\text{MAC}}}) \in \mc T_{\epsilon_3}^{n_{\text{MAC}}}(X_2, V|u^{n_{\text{MAC}}}) \} \nonumber 
\\
&\stackrel{(a)}{\leq} 2^{n R_2} 2^{-n_{\text{BC}} (I(X_2;V|U) - \delta(\epsilon_3) )} = 2^{-n ( 1-\alpha (I(X_2;V|U) -\delta(\epsilon_3)) - R_2 )}
\end{align}

where
\\$(a)$ follows from the joint typicality lemma.
With $I(X_2;V|U) = I(X_2; X_1,Y_r|U) = I(X_2;Y_r|X_1,U)$ due to the independence of $X_1$ and $X_2$ given $U$, in the limit $P(\mc E_3)\rightarrow 0$ for $n \rightarrow \infty$ if

\begin{align}
 \boxed{R_2 < \alpha I(X_2;Y_r|X_1,U).}
\end{align}

\subsubsection{Cardinality of $\mc U$}
This derivation closely follows the proof in \cite[Chapter 3.2]{schnurr2008thesis}.

The achievable rate region can be written as
\begin{align}
R_1 &\leq \sum_{u\in \mc U} p(u) \alpha I(X_1;\hY|X_2, U = u) & := \sum_{u\in \mc U} p(u) g_1(p(x_1,x_2|u))\nonumber \\
R_2 &\leq \sum_{u\in \mc U} p(u) \alpha I(X_2;\hY|X_1, U = u) & := \sum_{u\in \mc U} p(u) g_2(p(x_1,x_2|u))\nonumber \\
0 &\leq \sum_{u \in \mc U} p(u) \left[ (1-\alpha) I(X_r;Y_2) - \alpha I(Y_r;\hY|U=u) \right] & := \sum_{u\in \mc U} p(u) g_3(p(x_1,x_2|u))\nonumber \\
0 &\leq \sum_{u \in \mc U} p(u) \left[ (1-\alpha) I(X_r;Y_1) - \alpha I(Y_r;\hY|U=u) \right] & := \sum_{u\in \mc U} p(u) g_4(p(x_1,x_2|u)).
\end{align}

The rate region can be interpreted as a subset of the convex hull of the region in the $4$-dimensional space spanned by the functions $(g_1(p(x_1,x_2|u)), g_2(p(x_1,x_2|u)), g_3(p(x_1,x_2|u)), g_4(p(x_1,x_2|u))$, which only depend on the conditional pmf $p(x_1,x_2|u)$.
\\Let $\mc S$ denote the set of all such points for each choice of the compact set $p(x_1)p(x_2)p(y_r|x_1,x_2)p(\hy|y_r)p(x_r)p(y_1,y_2|x_r)$.
Precisely, 
$$\mc S = \bigcup_{p(x_1)p(x_2)p(y_r|x_1,x_2)p(\hy|y_r)p(x_r)p(y_1,y_2|x_r)} \{g_1(p(x_1,x_2)), g_2(p(x_1,x_2)),g_3(p(x_1,x_2)),g_4(p(x_1,x_2))\}$$
As continuous image of a compact set, $\mc S$ is connected.
Define $\mc C = \text{conv}(\mc S)$: By the Fenchel-Eggleston strenghtening of Carath\'{e}odory's theorem (e.g. \cite[Appendix A, C]{el2010lecture}), every point in $\mc C$ can be obtained by taking a convex combination of at most $4$ points in $\mc S$. As the rate region is a subset of $\mc C$, it follows that $|\mc U| \leq 4$.

\subsubsection{Cardinality of $\mc{\hat Y}_r$}

Similarly, the rate region can be written as

\begin{align}
R_1 &\leq \sum_{\hy\in \mc{\hat Y}_r} p(\hy) \alpha \left[ H(X_1|X_2,U) - H(X_1|X_2, U, \hY = \hy)\right] & := \sum_{\hy\in \mc{\hat Y}_r} p(\hy) g_1(p(y_r|\hy))\nonumber \\
R_2 &\leq \sum_{\hy\in \mc{\hat Y}_r} p(\hy) \alpha \left[H(X_2|X_1,U) - H(X_2|X_1, U, \hY = \hy)\right] & := \sum_{\hy\in \mc{\hat Y}_r} p(\hy) g_2(p(y_r|\hy))\nonumber \\
0 &\leq \sum_{\hy\in \mc{\hat Y}_r} p(\hy) \left[ (1-\alpha) I(X_r;Y_2) - \alpha \left( H(Y_r|U) - H(Y_r|U,\hY = \hy) \right) \right] & := \sum_{\hy\in \mc{\hat Y}_r} p(\hy) g_3(p(y_r|\hy)) \nonumber \\
0 &\leq \sum_{\hy\in \mc{\hat Y}_r} p(\hy) \left[ (1-\alpha) I(X_r;Y_1) - \alpha \left( H(Y_r|U) - H(Y_r|U,\hY = \hy) \right) \right] & := \sum_{\hy\in \mc{\hat Y}_r} p(\hy) g_4(p(y_r|\hy)).
\end{align}
Additionally, the following $|\mc Y_r|-1$ conditions have to be met:
\begin{align}
 p(Y_r = y_{ri}) = \sum_{\hy\in \mc{\hat Y}_r} p(\hy) p(Y_r = y_{ri}|\hY = \hy) & :=  \sum_{\hy\in \mc{\hat Y}_r} p(\hy) g_{i+4}(p(y_r|\hy))
\qquad \forall i = 1, \ldots, |\mc Y_r|-1
\end{align}

Again, the rate region can be interpreted as a subset of the convex hull of the $|\mc Y_r|+3$-dimensional space spanned by the functions $g_1(p(y_r|\hy)), \ldots, g_{|\mc Y_r|+3}(p(y_r|\hy))$, only depending on the conditional pmf $p(y_r|\hy)$.
Precisely, let $\mc S$ denote the set of all points spanned by $g_1(p(y_r|\hy)), \ldots, g_{|\mc Y_r|+3}(p(y_r|\hy))$ for each choice of the compact set $p(y_r|\hy)$, i.e.
$$\mc S = \bigcup_{p(y_r|\hy)} \{g_1(p(y_r|\hy)), \ldots, g_{|\mc Y_r|+3}(p(y_r|\hy))\} $$ 
As continuous image of a compact set, $\mc S$ is connected. Define $\mc C = \text{conv}(\mc S)$: By the Fenchel-Eggleston strenghtening of Carath\'{e}odory's theorem (e.g. \cite[Appendix A, C]{el2010lecture}), every point in $\mc C$ can be obtained by taking a convex combination of at most $|\mc Y_r|+3$ points in $\mc S$. As the rate region is a subset of $\mc C$, it follows that $|\mc{\hat Y}_r| \leq |\mc Y_r|+3$.

\subsection{Derivation of the claims in Sec.~\ref{sec:sum_rate_comparison}}
\label{sec:optimal_hatN}

For $\mc R_\text{CF}$ and $\mc R_\text{NoSW}$ the expression in Eq.~(\ref{eq:Rsum_alpha}) is straight forward.
For $\mc R_\text{NNC}$, the sum rate problem for the symmetric case can be written as
\begin{align}
 R^{\text{sum}}_{\text{NNC}}(\alpha) \max_{\hat N} \min\{ 2 \alpha C\left(  \frac{P}{N_r + \hat N}\right), 2\left( (1-\alpha) C\left(  \frac{P_r}{N} \right) - \alpha C \left(  \frac{N_r}{\hat N} \right) \right)\} \nonumber
\end{align}
The first expression inside the minimum is strictly decreasing an convex in $\hat N$. The second one is strictly increasing and concave in $\hat N$. Implicitly, we always assume that the second expression is $\geq 0$.
The function represented by the minimum is thus quasi-concave in $\hat N$. In particular, the maximal value is attained where both expressions inside the minimum are the same.
For that,
\begin{align}
2 \alpha C\left(  \frac{P}{N_r + \hat N}\right) &= 2\left( (1-\alpha) C\left(  \frac{P_r}{N} \right) - \alpha C \left(  \frac{N_r}{\hat N} \right) \right) \nonumber\\
\alpha \log\left( \frac{N_r + \hat N + P}{N_r + \hat N} \cdot \frac{\hat N + N_r}{\hat N}\right) &= (1-\alpha) \log\left( 1 + \frac{P_r}{N}\right) \nonumber \\
1 + \frac{N_r  + P }{\hat N } &= \left( 1 + \frac{P_r}{N}\right)^{(1-\alpha)/\alpha} \nonumber \\
\Rightarrow \hat N^*_\text{NNC}(\alpha) &= \frac{P + N_r}{(1 + P_r/N)^{(1-\alpha)/\alpha} -1 } 
\end{align}
which equals $\hat N^*_\text{CF}(\alpha)$

\subsection{Derivation of the properties of Sec.~\ref{sec:quant_design_sum_rate}}
\label{sec:derivation_propertiesIC}
We restate the properties with full explanation here

\begin{itemize}
 \item 
$I(C)$ is upper bounded by 
\begin{align}
 I(C) \leq  \left( I(X_1;Y_r|X_2) + I(X_2;Y_r|X_1) \right),
\label{eq:I(C)UB}
\end{align}
with equality if $ C \geq H(Y_r)$.
This is due to $I(Y_r;\hY) \leq H(Y_r)$, with equality if $H(Y_r|\hY)=0$.
Now, the upper bound in Eq.~(\ref{eq:I(C)UB}) is met if $I(X_1;Y_r|\hY,X_2) = 0$, i.e.
\begin{align}
 I(X_1;Y_r|\hY,X_2) = \underbrace{H(Y_r|\hY,X_2)}_{=0} - \underbrace{H(Y_r|\hY,X_1,X_2)}_{=0}
\end{align}

\item The functions $I(X_1;\hY|X_2) + I(X_2; \hY|X_1)$ and $ I(Y_r;\hY)$ are
convex in $p(\hy|y_r)$ \cite[Theorem 2.7.4]{cover2006elements}. Computing $I(C)$ thus requires maximizing a convex function over
a convex set, a difficult problem in general.
From the \emph{maximum principle} \cite[Cor. 32.3.2]{rockafellar1997convex}, it follows that for the optimal $p(\hy|y_r)$, $ I(Y_r;\hY) = C$.

\item $I(C)$ is an increasing and concave function in
$C$, for $0 \leq C \leq H(Y_r)$. 
The first part follows by the consideration before. Thus, for $0\leq C_1 < C_2 \leq  H(Y_r)$,  $I(C_1) < I(C_2)$. As a consequence, one can restrict the optimization to the constraint $ I(Y_r;\hY) =C$.
The second part will be proved in the following.

\end{itemize}
Recall that we abbreviate $p(\hy|y_r)$ by $p$:
We rewrite the problem, similar to \cite{witsenhausen1975conditional} and \cite{zeitler2012}:
\begin{align}
 I(C) = 
&\sup_{p}  \left(H(X_1) + H(X_2) - H(X_1|\hY,X_2) - H(X_2|\hY,X_1) \right), \nonumber \\ 
&\text{s.t. } H(Y_r) - H(Y_r|\hY) =  C, \quad 0\leq C \leq  H(Y_r).
\end{align}
By dropping the constant terms, an equivalent problem (in the sense of the same optimal argument) is given by
\begin{align}
 F(x) := \inf_{p}  \left( H(X_1|\hY,X_2) + H(X_2|\hY,X_1)\right), \nonumber \\
\text{s.t. } H(Y_r|\hY) = x, \quad 0\leq x \leq H(Y_r).
 \label{eq:Def_F(x)}
\end{align}
We investigate properties of $F(x)$.

In the following, we often use a vector representation of marginal probability distributions. 
The distribution of a general random variable $Z$, $p(z)$ is equivalently represented by the column vector $\ve p_{z} \in \Delta_{|\mc{Z}|}$ in the $|\mc{Z}|$-dimensional probability simplex $\Delta_{|\mc{Z}|}$, describing an $(|\mc{Z}|-1)$-dimensional space. 
The $i$-th coordinate is denoted by $p_{z,i} = p(Z=z_i)$.

Therefore, let $\ve{p}_{y_r} \in \Delta_{|\mc{Y}_r|}$, $\ve{p}_{\hat{y}_r} \in \Delta_{{\yhcard}}$ represent the marginal distribution $p(y_r)$,  $p(\hy)$, respectively.

Let $B = [\ve b_1, \ldots , \ve b_{{\yhcard}}]$ be a $|\mc{Y}_r| \times {\yhcard}$ stochastic matrix with $\ve b_{i} \in \Delta_{|\mc{Y}_r|}$ in the $i$-th column.

Introduce the random variables 
\begin{itemize}
 \item $Y_r'$ with marginal distribution  
  $\ve p_{y_r'} = B \ve{p}_{\hat{y}_r} = \sum_{i=1}^{{\yhcard}} p(\hat y_{ri}) \ve b_i$,
 \item $X_1'$ with marginal distribution  
  $p(x_{1j}') =   \sum_{i=1}^{|\mc Y_r|} p(x_{1j}|y_{ri}) p(y_{ri}'),~\forall~j=1,\ldots,|\mc X_1| $,
 \item $X_2'$ with marginal distribution 
  $p(x_{2j}') =   \sum_{i=1}^{|\mc Y_r|} p(x_{2j}|y_{ri}) p(y_{ri}'),~\forall~j=1,\ldots,|\mc X_2| $
\end{itemize}

In general, the matrix $B$ corresponds to $p(y_r'|\hy)$.
Clearly, if $B$ is equal to to $p(y_r|\hy)$, then $p(y_r') = p(y_r)$, $p(x_1') = p(x_1)$ and $p(x_2') = p(x_2)$.

One can write:
\begin{align}
 \ve p_{y_r'} &= \sum_{i=1}^{{\yhcard}} p(\hat y_{ri}) \ve b_i  \label{eq:p_proof}\\
 \xi & = H(Y_r'|\hY) = \sum_{i=1}^{{\yhcard}} p(\hat y_{ri}) H(Y_r'|\hY = \hat y_{ri}) 
  := \sum_{i=1}^{{\yhcard}} p(\hat y_{ri}) h_{|\mc{Y}_r|}(\ve b_i) \leq \log(|\mc{Y}_r|)  \label{eq:xi_proof} \\
 &\text{with $h_{|\mc{Y}_r|}(\ve p) = \sum_{j=1}^{|\mc{Y}_r|} p_j \log(p_j)$ as the entropy function.}\nonumber \\
 \eta & = H(X_1'|X_2',\hY) + H(X_2'|X_1', \hY)  \nonumber \\
      & = \sum_{i=1}^{{\yhcard}} p(\hat y_{ri}) \left[ \underbrace{H(X_1'|X_2', \hY = \hat y_{ri})}_{\leq \log(|\mc{X}_1|)}   \underbrace{H(X_2'|X_1', \hY = \hat y_{ri})}_{\leq \log(|\mc{X}_2|)} \right]  
       := \sum_{i=1}^{{\yhcard}} p(\hat y_{ri}) g(\ve b_i) \leq \log(|\mc{X}_1|\cdot |\mc{X}_2| ) \label{eq:eta_proof}
\end{align}

The problem in Eq.~(\ref{eq:Def_F(x)}) can be stated as
\begin{align}
F(x) =  \inf_{\ve p_{y_r'} = \ve{p}_{y_r}} \eta, \qquad \text{s.t. } \xi = x.
\end{align}

Now define the mapping $\ve b_i \in \Delta_{|\mc{Y}_r|} \rightarrow \left(\ve b_i, h_{|\mc{Y}_r|}(\ve b_i), g(\ve b_i) \right)$
Remember that $\Delta_{|\mc{Y}_r|}$ is $(|\mc{Y}_r|-1)$-dimensional, so the polytope $\Delta_{|\mc{Y}_r|} \times [0, \log(|\mc{Y}_r|)] \times [0, \log(|\mc{X}_1|\cdot |\mc{X}_2|)]$ is $(|\mc{Y}_r|+1)$-dimensional and the mapping assigns points inside this polytope for each $\ve b_i$.
Let $\mc S$ denote the set all all these points. As $h_{|\mc{Y}_r|}(\ve b_i)$ and $g(\ve b_i)$ are continuous functions of $\ve b_i$ \cite[Chapter 2.3]{yeung2008information}, $\mc S$ is compact and connected. 
\\Define $\mc C$ as the convex hull of $\mc S$, i.e. $\mc C = \text{conv}(\mc S)$.

\begin{lemma}
 The set of pairs $(\ve p_{y_r'}, \xi, \eta)$ defined in Eqs.~(\ref{eq:p_proof} - \ref{eq:eta_proof}) is precisely $\mc C$, for all integers ${\yhcard}>0$, $\ve{p}_{\hat{y}_r} \in \Delta_{{\yhcard}}$, $\ve b_i \in \Delta_{|\mc{Y}_r|}$, $i = 1,\ldots,{\yhcard}$.
\label{lemma:convexhull}
\end{lemma}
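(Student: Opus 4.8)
The plan is to prove the two inclusions $\mc C \subseteq \{(\ve p_{y_r'},\xi,\eta)\}$ and $\{(\ve p_{y_r'},\xi,\eta)\}\subseteq \mc C$ separately, which is the standard way to identify a convex hull with a parametrized family. The parametrization in Eqs.~(\ref{eq:p_proof}--\ref{eq:eta_proof}) says exactly that a triple $(\ve p_{y_r'},\xi,\eta)$ is a convex combination $\sum_i p(\hat y_{ri})(\ve b_i, h_{|\mc Y_r|}(\ve b_i), g(\ve b_i))$ of points of $\mc S$, with the weights $p(\hat y_{ri})$ forming a distribution on an alphabet of arbitrary finite size $L$. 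So the set of all such triples is precisely the set of all \emph{finite} convex combinations of points of $\mc S$, which is $\mathrm{conv}(\mc S)=\mc C$ by definition. That already gives both inclusions essentially by unwinding definitions: any finite convex combination of elements of $\mc S$ uses some number $L$ of points $\ve b_1,\dots,\ve b_L$ with weights summing to one, and conversely the parametrization produces exactly such a combination. The only subtlety is the phrase ``for all integers $L>0$'': I would note that since $L$ ranges over all positive integers, no a priori bound on the number of mixture components is needed, so the family of triples is literally $\mathrm{conv}(\mc S)$ without any Carath\'eodory-type reduction.

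Concretely, I would first recall that $\mc S = \{(\ve b, h_{|\mc Y_r|}(\ve b), g(\ve b)) : \ve b \in \Delta_{|\mc Y_r|}\}$ and that by definition $\mc C=\mathrm{conv}(\mc S)$ consists of all points $\sum_{i=1}^{L} \theta_i s_i$ with $L<\infty$, $s_i\in\mc S$, $\theta_i\ge 0$, $\sum_i\theta_i=1$. For the inclusion ``triples $\subseteq \mc C$'': given any $L$, any $\ve p_{\hat y_r}\in\Delta_L$, and any columns $\ve b_i\in\Delta_{|\mc Y_r|}$, set $\theta_i = p(\hat y_{ri})$ and $s_i=(\ve b_i, h_{|\mc Y_r|}(\ve b_i), g(\ve b_i))\in\mc S$; then Eqs.~(\ref{eq:p_proof}--\ref{eq:eta_proof}) show the triple equals $\sum_i\theta_i s_i\in\mc C$. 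For the reverse inclusion ``$\mc C \subseteq$ triples'': any point of $\mc C$ is of the form $\sum_{i=1}^{L}\theta_i(\ve b_i, h_{|\mc Y_r|}(\ve b_i), g(\ve b_i))$; define $\ve p_{\hat y_r}$ by $p(\hat y_{ri})=\theta_i$, which lies in $\Delta_L$, and read Eqs.~(\ref{eq:p_proof}--\ref{eq:eta_proof}) backwards to recover exactly this point as a valid $(\ve p_{y_r'},\xi,\eta)$.

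I would also remark briefly on well-definedness of the components of the triple in terms of $\ve b_i$ alone: $\xi$ and $\eta$ in Eqs.~(\ref{eq:xi_proof}--\ref{eq:eta_proof}) are written as mixtures of $h_{|\mc Y_r|}(\ve b_i)$ and $g(\ve b_i)$, so one should check that $H(Y_r'|\hY=\hat y_{ri})$ and $H(X_1'|X_2',\hY=\hat y_{ri})+H(X_2'|X_1',\hY=\hat y_{ri})$ depend only on the conditional column $\ve b_i=p(y_r'|\hY=\hat y_{ri})$ and on the fixed channels $p(x_1|y_r)$, $p(x_2|y_r)$ — which is immediate since the joint law of $(X_1',X_2',Y_r')$ given $\hY=\hat y_{ri}$ is determined by $\ve b_i$ together with the fixed Markov kernel $(x_1,x_2)\!-\!y_r$. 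This justifies the notation $g(\ve b_i)$, $h_{|\mc Y_r|}(\ve b_i)$ used implicitly in the statement.

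The main ``obstacle'' is really just bookkeeping rather than a genuine difficulty: one must be careful that $\mathrm{conv}(\mc S)$ is understood as \emph{all} finite convex combinations (not combinations of a bounded number of points), so that the identification is exact and one does not accidentally invoke Carath\'eodory to shrink $L$. Since the lemma explicitly quantifies over all integers $L>0$, this is exactly the right notion, and the proof reduces to the two definitional inclusions above. No compactness or continuity is needed for Lemma~\ref{lemma:convexhull} itself (those were used earlier only to conclude that $\mc C$ is compact, which will matter for the subsequent analysis of $F(x)$, not here).
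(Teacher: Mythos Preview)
Your proof is correct and follows essentially the same approach as the paper, which simply cites \cite[Lemma 2.1]{witsenhausen1975conditional} and notes that the claim holds ``by definition of the convex hull.'' You have merely unpacked that one-line justification into the two definitional inclusions, which is exactly what is intended.
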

\begin{proof}
 \cite[Lemma 2.1]{witsenhausen1975conditional} by definition of the convex hull.
\end{proof}

\begin{prop}
 The function $F(x)$ is convex in $x$ for $0\leq x \leq H(Y_r)$.
\label{prop:F(x)convex}
\end{prop}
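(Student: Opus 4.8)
The plan is to exploit the geometric characterization just established in Lemma~\ref{lemma:convexhull}. By that lemma, the set of achievable triples $(\ve p_{y_r'}, \xi, \eta)$ coincides with the convex hull $\mc C = \tn{conv}(\mc S)$, where $\mc S$ is a compact connected set in the $(|\mc Y_r|+1)$-dimensional polytope. The key observation is that $F(x)$ can be re-read as a section of this convex body: fixing $\ve p_{y_r'} = \ve p_{y_r}$ (an affine constraint) and $\xi = x$ (another affine constraint), $F(x)$ is the infimum of the last coordinate $\eta$ over the slice of $\mc C$ cut out by these constraints. So $F$ is precisely the lower boundary function of a convex set, restricted to an affine slice.

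Concretely, I would argue as follows. Let $x_1, x_2 \in [0, H(Y_r)]$ and let $\theta \in [0,1]$. By definition of $F$ (as an infimum), for any $\varepsilon > 0$ there exist points $(\ve p_{y_r}, x_1, \eta_1), (\ve p_{y_r}, x_2, \eta_2) \in \mc C$ with $\eta_1 \leq F(x_1) + \varepsilon$ and $\eta_2 \leq F(x_2) + \varepsilon$. Since $\mc C$ is convex, the convex combination
\begin{align}
\theta (\ve p_{y_r}, x_1, \eta_1) + (1-\theta)(\ve p_{y_r}, x_2, \eta_2) = \big(\ve p_{y_r},\, \theta x_1 + (1-\theta) x_2,\, \theta \eta_1 + (1-\theta)\eta_2\big) \nonumber
\end{align}
also lies in $\mc C$. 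The first coordinate is still $\ve p_{y_r}$, so this point is admissible for the problem defining $F$ at the argument $\theta x_1 + (1-\theta) x_2$. Hence $F(\theta x_1 + (1-\theta) x_2) \leq \theta \eta_1 + (1-\theta)\eta_2 \leq \theta F(x_1) + (1-\theta) F(x_2) + \varepsilon$. Letting $\varepsilon \to 0$ gives convexity. One should also remark that the infimum is attained (so $F$ is well-defined and finite on its domain): $\mc C$ is compact as the convex hull of the compact set $\mc S$ in finite dimension, the affine slice is closed, and for $0 \le x \le H(Y_r)$ it is nonempty (taking $B$ equal to $p(y_r|\hy)$ for the genuinely optimal test channel realizes $\ve p_{y_r'} = \ve p_{y_r}$, and sweeping quantizers covers the full range of $\xi$ by continuity/connectedness), so the infimum of the continuous linear functional $\eta$ over a nonempty compact set is achieved.

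The main obstacle is not the convexity inequality itself — that is the routine ``slice of a convex set'' argument above — but rather making sure the reduction is airtight: one must check that the constraint $\ve p_{y_r'} = \ve p_{y_r}$ genuinely defines the same optimization as the original $F(x)$ in Eq.~(\ref{eq:Def_F(x)}), i.e. that allowing an auxiliary alphabet of arbitrary size ${\yhcard}$ and arbitrary stochastic matrices $B$ does not change the value. This is exactly what Lemma~\ref{lemma:convexhull} delivers, so the work is really in citing it correctly and noting that $\eta$ as defined in Eq.~(\ref{eq:eta_proof}) agrees with the objective $H(X_1|\hY,X_2)+H(X_2|\hY,X_1)$ under the constraint, and $\xi$ in Eq.~(\ref{eq:xi_proof}) with $H(Y_r|\hY)$. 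A secondary, more pedestrian point is the nonemptiness of the slice for every $x$ in the stated range, which follows from connectedness of $\mc S$ (hence of the projection of the relevant face onto the $\xi$-axis) together with the endpoint realizations $x=0$ (deterministic quantizer, if $|\mc{\hat Y}_r|\ge|\mc Y_r|$) and $x=H(Y_r)$ (trivial quantizer $\hY=\tn{const}$).
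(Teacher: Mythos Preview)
Your proposal is correct and follows essentially the same approach as the paper: both arguments invoke Lemma~\ref{lemma:convexhull} to identify the achievable triples with the convex hull $\mc C$, intersect with the affine constraint $\ve p_{y_r'}=\ve p_{y_r}$, and read off $F(x)$ as the lower boundary of the resulting convex, compact two-dimensional slice, with nonemptiness on $[0,H(Y_r)]$ established via the endpoint constructions $B=I_{|\mc Y_r|}$ (giving $\xi=0$) and the trivial one-level quantizer (giving $\xi=H(Y_r)$). The only cosmetic difference is that the paper phrases convexity as ``boundary of a convex set,'' whereas you verify the convexity inequality directly by taking convex combinations of near-optimal points; the content is the same.
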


\begin{proof}
The function of interest $F(x)$ is the minimum of $\eta$ for which $\ve p_{y_r'} = \ve{p}_{y_r}$ and $\xi = x$. 

Define $\mc C_{\ve{p}_{y_r}}$ as the the projection of the intersection of $\mc C$ with the convex and compact set defined by $\ve p_{y_r'} = \ve{p}_{y_r}$ onto the plane $(\xi,\eta) \subset \mathbb{R}^2$.
Precisely, $\mc C_{\ve{p}_{y_r}} = \text{Proj}\{\mc C \cap \mc L_{\ve{p}_{y_r}} \}$, where $\mc L_{\ve{p}_{y_r}} = \{(\ve p_{y_r'},\xi,\eta) \subset \mathbb{R}^{|\mc{Y}_r|+1}| \ve p_{y_r'} = \ve{p}_{y_r}\}$. Note that $\mc L_{\ve{p}_{y_r}}$ is convex and compact.
As convexity is preserved under intersection \cite[Sect. 2.3.1]{boyd2004convex} and projection onto coordinates \cite[Sect. 2.3.2]{boyd2004convex}, the set $\mc C_{\ve{p}_{y_r}}$ is also convex and compact. That is, the infimum in Eq.~(\ref{eq:Def_F(x)}) can be attained and is thus a minimum, provided that the intersection $\mc C \cap \mc L_{\ve{p}_{y_r}} \not = \emptyset$. 
Like in \cite{witsenhausen1975conditional}, for ${\yhcard}=1$, $\ve b_1 = \ve{p}_{y_r}$, $p(\hat y_{r1})=1$ it follows that $\ve p_{y_r'} = \ve{p}_{y_r}$ and $\xi = h_{|\mc{Y}_r|}(\ve{p}_{y_r}) = H(Y_r)$. Choosing ${\yhcard}=|\mc{Y}_r|$, $\ve{p}_{\hat{y}_r} = \ve{p}_{y_r}$, $B = [\ve b_1, \ldots, \ve b_{|\mc{Y}_r|}] = I_{|\mc{Y}_r|}$, it follows that $\ve p_{y_r'} = \ve{p}_{y_r}$ and $\xi = 0$. By the convexity of $\mc C$, there must be points for which $\ve p_{y_r'} = \ve{p}_{y_r}$ and $\xi = x$ for $0\leq x \leq H(Y_r)$. Thus, the intersection is never empty.

As $\mc C_{\ve{p}_{y_r}}$ is convex and compact and $F(x)$ is the boundary of this convex set, $F(x)$ is itself convex, as illustrated in Fig.~\ref{fig:convex_set_proof} (similar to  \cite[Fig. C.2]{zeitler2012}).

\begin{figure}[ht]
 \centering
\psfrag{a}{$\xi$}
\psfrag{b}{$\eta$}
\psfrag{c}{${H(X_1|X_2,Y_r)+} \atop {H(X_2|X_1,Y_r)}$}
\psfrag{d}{$\log|\mc X_1| + \log|\mc X_2|$}
\psfrag{e}{$H(Y_r)$}
\psfrag{f}{increasing $C$}
\psfrag{g}{$\mc C_{\ve{p}_{y_r}}$}
\psfrag{h}{$F(x)$}
\psfrag{i}{$0$}
\includegraphics[width=0.4\columnwidth]{./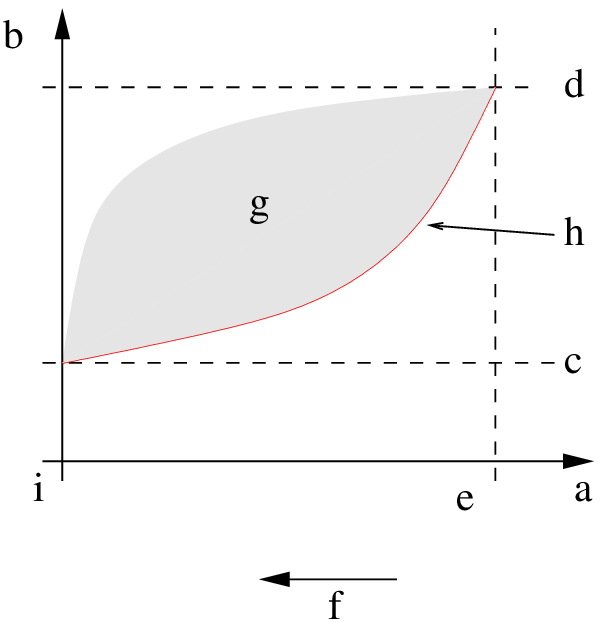}
\caption{Visualization of $\mc C_{\ve{p}_{y_r}}$ and the function $F(x)$}
\label{fig:convex_set_proof}
\end{figure}

\end{proof}

\begin{cor}
 $I(C)$ is a concave function in $C$, for $0\leq C \leq  H(Y_r)$.
\label{cor:I(C)_concave}
\end{cor}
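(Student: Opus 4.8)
The plan is to recognize $I(C)$ as, up to an affine reparametrization and a change of sign, the function $F(x)$ whose convexity is the content of Proposition~\ref{prop:F(x)convex}; concavity of $I(C)$ then follows because negating a convex function composed with an affine map yields a concave function.

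First I would record the exact dictionary between the two optimization problems. Using $I(Y_r;\hY) = H(Y_r) - H(Y_r|\hY)$ and $J(p) = H(X_1) + H(X_2) - H(X_1|\hY,X_2) - H(X_2|\hY,X_1)$, the definition (\ref{eq:def_I(C)}) of $I(C)$ --- with the constraint tightened to $I(Y_r;\hY) = C$, which is legitimate for $0 \le C \le H(Y_r)$ by the maximum principle in property 1) of Sec.~\ref{sec:quant_design_sum_rate} together with the strict monotonicity of $I(C)$ noted in property~\ref{property:concave}) --- becomes a maximization of $H(X_1)+H(X_2) - \bigl(H(X_1|\hY,X_2)+H(X_2|\hY,X_1)\bigr)$ subject to $H(Y_r|\hY) = H(Y_r)-C$. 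Comparing with the definition (\ref{eq:Def_F(x)}) of $F$ gives
\[
  I(C) = H(X_1) + H(X_2) - F\bigl(H(Y_r) - C\bigr), \qquad 0 \le C \le H(Y_r),
\]
the admissible range of $C$ being precisely the one for which the argument $x = H(Y_r) - C$ stays in $[0,H(Y_r)]$, the interval on which $F$ is defined.

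Next I would apply Proposition~\ref{prop:F(x)convex}: $F$ is convex on $[0,H(Y_r)]$. Since $C \mapsto H(Y_r) - C$ is affine and convexity is preserved under composition with affine maps, $C \mapsto F(H(Y_r)-C)$ is convex on $[0,H(Y_r)]$; negating it and adding the constant $H(X_1)+H(X_2)$ preserves concavity, so $I(C)$ is concave on $[0,H(Y_r)]$, which is exactly the assertion. The only step that needs care --- and the main, albeit minor, obstacle --- is the reduction of the inequality constraint $I(Y_r;\hY)\le C$ to the equality $I(Y_r;\hY)=C$; this is precisely what property 1) and the monotonicity of $I(C)$ established earlier provide, so no new argument is required and everything else is the routine bookkeeping of the affine substitution.
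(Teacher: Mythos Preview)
Your proposal is correct and follows exactly the approach the paper intends: the paper's own proof is the single line ``The proof follows by Proposition~\ref{prop:F(x)convex},'' and your write-up simply spells out the dictionary $I(C)=H(X_1)+H(X_2)-F(H(Y_r)-C)$ and the routine affine/negation step that the paper leaves implicit. Your care about tightening the inequality constraint to an equality is also already handled earlier in the appendix, so nothing is missing.
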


 The proof follows by Proposition \ref{prop:F(x)convex}.

\subsection{Interpretation of Lagrangian}
It is known from Cor.~\ref{cor:I(C)_concave} that $I(C)$ is concave in $C$ corresponding to the quantizer rate $I(Y_r;\hY)$, for $0\leq C \leq H(Y_r)$.
Let $(C_\lambda, I_\lambda)$ be the point on the $I(C)$ curve at which the tangent has the slope $\lambda$. As $I(C)$ is nondecreasing, $\lambda \geq 0$.
The tangent at this point intersects with the y-axis at $I_\lambda - \lambda C_\lambda$, as illustrated in Fig.~\ref{fig:I(C)_concave}.
Recall that we abbreviate $p(\hy|y_r)$ by $p$:
Let $(C_{p'},I_{p'})$ be another point with $C_{p'} = I(Y_r;\hY)$ and $I_{p'} = J(p')$ corresponding to some conditional pmf (cond. pmf) $p' \not = p^*$, i.e. not lying on the $I(C)$-curve.
The line through $(C_{p'},I_{p'})$ with slope $\lambda$ intersects the y-axis at $I_{p'}-\lambda C_{p'}$.
Due to the concavity of $I(C)$, all intersections of lines with a given slope $\lambda$ lie below $I_\lambda - \lambda C_\lambda$.
To find the optimal axis intercept $I_\lambda - \lambda C_\lambda$ one can write:

\begin{align}
 I_\lambda - \lambda C_\lambda = \max_{\text{cond. pmf } p} \left\{ I_p - \lambda C_Q \right\} = \max_{\text{cond. pmf } p} \{ \mc L(p,\lambda)\}, \quad \lambda \geq 0
\end{align}
with $\mc L(p,\lambda)$ corresponding to the Lagrangian of the problem (dropping the equality constraints for $p$ which are captured in the constraint set here).

\begin{figure}[ht]
 \centering
\psfrag{a}{$C$}
\psfrag{b}{$I(C)$}
\psfrag{c}{$(I_{\lambda},C_{\lambda}) $}
\psfrag{d}{$I(X_1;Y_r|X_2) + \atop I(X_2;Y_r|X_1)$}
\psfrag{e}{$(I_{p'},C_{p'})$}
\psfrag{f}{\hspace*{-10mm}$I_{\lambda}-\lambda C_{\lambda}$}
\psfrag{g}{\hspace*{-12mm}$I_{p'}-\lambda C_{p'}$}
\psfrag{h}{$\lambda$}
\psfrag{i}{$0$}
\includegraphics[width=0.4\columnwidth]{./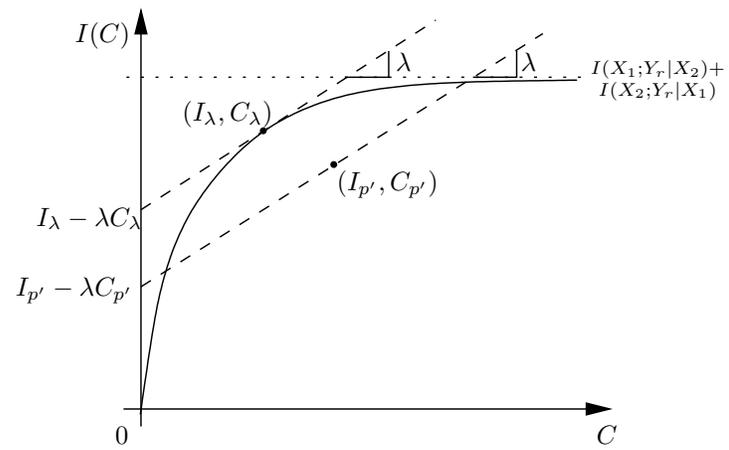}
\caption{Visualization of $I(C)$}
\label{fig:I(C)_concave}
\end{figure}

Running the optimization routine in Sec.~\ref{sec:computeI(C)} for a Lagrangian multiplier $\lambda$ should return the point on $I(C)$ with slope $\lambda$.

\end{document}